\newcommand{\composedUprime}{%
  \mathrel{\vbox{\offinterlineskip\ialign{%
    \hfil##\hfil\cr
    $\scriptscriptstyle\circ$\cr
    \noalign{\kern0.1ex}
    $\boldU'$\cr
}}}}
\newcommand{\composedU}{%
  \mathrel{\vbox{\offinterlineskip\ialign{%
    \hfil##\hfil\cr
    $\scriptscriptstyle\circ$\cr
    \noalign{\kern0.1ex}
    $\boldU$\cr
}}}}
\newcommand{\composedW}{%
  \mathrel{\vbox{\offinterlineskip\ialign{%
    \hfil##\hfil\cr
    $\scriptscriptstyle\circ$\cr
    \noalign{\kern0.1ex}
    $\boldW$\cr
}}}}
\newcommand{\path}{\mathfrak{p}}
\newcommand{\apartialorder}{\ell}
\newcommand{\partialorders}{\mathcal{P}}
\newcommand{\alphabettransitions}{{T}}
\newcommand{\process}{{\pi}}
\newcommand{\directedslicealphabet}{\overrightarrow{\mathbold{\Sigma}}}
\newcommand{\width}{w}
\newcommand{\labelAlphabet}{{\alphabettransitions}}
\newcommand{\unitdecompositions}{\mathbold{ud}}
\newcommand{\transitiveClosure}{\mathbold{tc}}
\newcommand{\transitiveReduction}{\mathbold{tr}}
\newcommand{\msographs}{\mathit{MSO}_2}
\newcommand{\msopartialorders}{\mathit{MSO}}
\newcommand{\automaton}{{\mathcal{A}}}
\newcommand{\boldS}{{\mathbf{S}}}
\newcommand{\boldU}{{\mathbf{U}}}
\newcommand{\lang}{{\mathcal{L}}}
\newcommand{\N}{{\mathbb{N}}}
  \newcommand{\takesp}{{\hat{p}}} \newcommand{\putsp}{{\check{p}}}
\newcommand{\keywords}[1]{\par\addvspace\baselineskip\noindent\keywordname\enspace\ignorespaces#1}
\newcommand{\petriNet}{{N}}
\newcommand{\graph}{{\mathcal{G}}}
\newcommand{\partialOrder}{{\mathit{po}}}
\begin{document}

\title{Automated Verification, Synthesis and Correction of Concurrent Systems via MSO Logic   
\vspace{-0.5cm}}
\author{
Mateus de Oliveira Oliveira}
\institute{KTH Royal Institute of Technology \\  mdeoliv@kth.se}

\maketitle

\vspace{-0.7cm}
\begin{abstract} \noindent 
In this work we provide algorithmic solutions to five fundamental problems concerning the 
verification, synthesis and correction of concurrent systems that can be modeled by  
bounded $p/t$-nets. We express concurrency via partial orders and assume that behavioral specifications
are given via monadic second order logic.
A $c$-partial-order is a partial order 
whose Hasse diagram can be covered by $c$ paths. 
For a finite set $T$ of transitions, we let $\partialorders(c,T,\varphi)$ denote the set of all $T$-labelled $c$-partial-orders satisfying $\varphi$. 
If $N=(P,T)$ is a $p/t$-net we let $\partialorders(N,c)$ denote the set of 
all $c$-partially-ordered runs of $N$. A $(b,r)$-bounded $p/t$-net is a $b$-bounded $p/t$-net
in which each place appears repeated at most $r$ times. We solve the following problems:  

\begin{enumerate}
	\item \textbf{Verification:} 
		given an MSO formula $\varphi$ and a bounded $p/t$-net $N$ determine 
		whether $\partialorders(N,c)\subseteq \partialorders(c,T,\varphi)$, whether 
		$\partialorders(c,T,\varphi)\subseteq \partialorders(N,c)$, or whether 
		$\partialorders(N,c)\cap \partialorders(c,T,\varphi) = \emptyset$. 
	\item \textbf{Synthesis from MSO Specifications:} 
		given an MSO formula $\varphi$, synthesize a semantically minimal $(b,r)$-bounded $p/t$-net
		$N$ satisfying $\partialorders(c,T,\varphi)\subseteq \partialorders(N,c)$. 	
	\item \textbf{Semantically Safest Subsystem: }  given an MSO formula $\varphi$ defining a set of safe partial orders, and a $b$-bounded $p/t$-net $N$, 
		possibly containing unsafe behaviors, synthesize the safest $(b,r)$-bounded $p/t$-net $N'$ whose behavior lies in between 
		$\partialorders(N,c)\cap \partialorders(c,T,\varphi)$ and $\partialorders(N,c)$. 
	\item \textbf{Behavioral Repair: } 
		 given two MSO formulas $\varphi$ and $\psi$, and a $b$-bounded $p/t$-net $N$, 
		synthesize a semantically minimal $(b,r)$-bounded $p/t$ net $N'$ whose behavior lies in between 
		$\partialorders(N,c)\cap \partialorders(c,T,\varphi)$ and $\partialorders(c,T,\psi)$. 
	\item \textbf{Synthesis from Contracts:}
		given an MSO formula $\varphi^{\mathit{yes}}$ specifying a set of good behaviors and an MSO formula $\varphi^{\mathit{no}}$ specifying 
		a set of bad behaviors, synthesize a semantically minimal $(b,r)$-bounded $p/t$-net $N$ 
		such that $\partialorders(c,T,\varphi^{\mathit{yes}}) \subseteq \partialorders(N,c)$ but $\partialorders(c,T,\varphi^{\mathit{no}})\cap \partialorders(N,c) = \emptyset$. 
\end{enumerate}

\keywords{System Synthesis, Verification of Concurrent Systems, Automated Repair, Monadic Second Order Logic, Partial Orders, Slice Theory}
 \end{abstract}

\vspace{-1cm}
\section{Introduction}
\label{section:Introduction}
\vspace{-0.2cm}
{\em Model checking} and {\em system synthesis} are two complementary paradigms that 
are widely used to provide correctness guarantees for computational systems. On the one hand, the goal of model 
checking is to verify whether the behavior of a given system is in accordance with a given specification 
\cite{ClarkeEmerson1986,ClarkeGrumbergPeled1999,Pnueli1981,QueilleSifakis1982}. On the other hand, the goal 
of system synthesis is to mechanically construct a system from a behavioral specification \cite{CernyChatterjeeHenzingerRadhakrishnaSingh2011,ClarkeEmerson1982,KupfermanVardiYannakakis2011,MannaWolper1984,PnueliRosner1989B}. 
When combined, model checking and synthesis can be used as primitives for the development of 
powerful methodologies aimed at the mechanical correction of bugs, such as system repair
\cite{GriesmayerBloemCook2006,JobstmannGriesmayerBloem2005,SamantaDeshmokhEmerson2008,EssenJobstmann2013}.
In this work we develop a combined theory of {\em model checking} and {\em system synthesis} that 
is fully compatible with the partial order theory of concurrency. Our systems are modeled via 
bounded place/transition nets, while our behavioral specifications are 
given in monadic second order logic. We solve five fundamental problems lying in the intersection of 
system verification, system synthesis and system repair. First we show how to compare the partial order 
behavior of bounded $p/t$-nets with  partial order behaviors specified via MSO formulas. Second, we show 
how to synthesize bounded $p/t$-nets from MSO-definable sets of partial orders. Third, we show how 
to obtain the {\em semantically safest subsystem} of a bounded $p/t$-net with respect to a MSO specification. 
Fourth, we transpose the methodology of program repair introduced by Jobstmann and von Essen \cite{EssenJobstmann2013}
to the context of bounded $p/t$-nets with partial-order runs. Finally, we show how to synthesize bounded
$p/t$-nets from partial-order contracts. 
Before giving a precise definition of each of the problems described above, we briefly introduce the main elements 
of our model. 

\vspace{-0.2cm}
\paragraph{\textbf{Bounded place/transition nets: }}
Petri nets  \cite{Petri1962A}, also known as place/transition-nets are recognized as an elegant mathematical formalism for the
specification of concurrent systems. During the last four decades, $p/t$-nets have found applications in the modeling of 
real time fault tolerant systems, faulty critical systems, communication protocols, 
logic controllers, and many others types of computing systems \cite{Murata1989,ZurawskiZhou1994}.
A $p/t$-net consists of a multiset of places, which are initially loaded with a set of tokens, 
and a set of transitions. The disposition of the tokens among the places of a $p/t$-net determine which transitions are allowed 
to fire. A transition, by its turn, when fired, removes tokens from some places and adds tokens to some places.
In this work we will be concerned with the partial order theory of bounded $p/t$-nets. We say that 
a $p/t$-net is $b$-bounded if after firing any sequence of transitions, the number of tokens in 
each of its places remains bounded by $b$, and that a $p/t$-net is $(b,r)$-bounded if it is $b$-bounded and if each place 
occurs in it at most $r$ times. We will define $p/t$-nets more precisely in Section \ref{section:PetriNets}.

\vspace{-0.3cm}
\paragraph{\textbf{Partial Orders:}}

When concurrency is interpreted accordingly to the interleaving semantics, the
execution of concurrent actions is identified with the non-deterministic choice among
all possible orders in which such actions can occur. Although satisfactory for many
applications of practical relevance, this point of view has some drawbacks. First,
the interleaving semantics is not compatible with the notion of action refinement,
in which an atomic action is replaced by a set of sub-actions \cite{GlabbeekGoltz2001}. 
Second, this point of view is not appropriate to model concurrent scenarios in which several users
have concurrent read/write access to databases \cite{FleRoucairol1982} nor to model the behavior 
of read/write operations in multiprocessors that implement {\em weak memory models} \cite{AlglaveKroeningTautschnig2013}. 

A well established point of view which is able to overcome these drawbacks is to represent both concurrency 
and causality as partially ordered sets of events \cite{Gischer1988,GaifmanPratt1987,Vogler1992}. 
The partial order imposed on a set of events can be interpreted according to two standard semantics. 
According to the first, the {\em causal semantics}, an event $v$ is smaller than an event $v'$ if 
$v'$ causally depends on the occurrence of $v$. According to the second, the execution semantics, the fact that 
$v$ is smaller than $v'$ simply indicates that $v$ does not occur after $v'$. However, in this case the event $v$ may not 
necessarily be one of the causes of the event $v'$. In this work, we will study the partial order behavior of 
bounded $p/t$-nets according to both semantics. 
 
\paragraph{\textbf{$c$-Partial-Orders}:}
We introduce a new parameterization for the study of the partial order behavior of concurrent 
systems. Recall that the Hasse diagram of a partial order $\ell$ is the directed acyclic graph $H$ with the least number 
of edges whose transitive closure equals $\apartialorder$. We say that a partial order $\ell$ is a 
$c$-partial-order if its Hasse diagram $H$ can be covered by $c$ paths. In other words if there exist 
paths $\path_1,...,\path_k$ in $H$ such that $H=\cup_{i=1}^c \path_i$. We notice that the paths are not 
assumed to be edge disjoint nor vertex disjoint. We let $\partialorders_{cau}(N,c)$ denote the set 
of all $c$-partial-orders that can be associated to a $p/t$-net $N$ according to the causal semantics, and by $\partialorders_{ex}(N,c)$
the set of all $c$-partial-orders that can be associated with $N$ according to the execution semantics.
Intuitively, the parameter $c$ characterizes the thickness of the partial order, and provides a width measure 
that is stronger and more algorithmically friendly than than the traditional notion of width used in partial order theory.
We observe that the execution behavior $\partialorders_{ex}(N,1)$ is simply the set of all possible firing sequences of $N$. 
If $N$ is a $b$-bounded $p/t$-net with $n$ places then the set $\partialorders_{cau}(N,b\cdot n)$ already 
comprises all possible causal runs of $N$. We contrast this observation with the fact that there are very 
simple examples\footnote{For instance a net consisting of two places $p_1,p_2$, initialize with a unique token each, and 
two transitions $t_1,t_2$ such that $t_i$ takes one token from $p_i$ and puts it back on $p_i$.} of $p/t$-nets whose execution behavior $\partialorders_{ex}(N,c)$ is strictly contained into 
$\partialorders_{ex}(N,c+1)$ for each $c\in \N$.%

\paragraph{\textbf{Monadic Second Order Logic of Graphs: }}
The monadic second order logic over partial orders extends first order logic 
by adding the possibility of quantifying over sets of vertices. The role of 
MSO logic in the study of the partial order behavior of concurrent
systems was emphasized in \cite{Madhusudan2001} in the context of the theory 
of message sequence graphs. Let $T$ be a finite set of symbols, which should be regarded as labels of transitions in a concurrent system. 
We say that a partial order $\apartialorder$ is a $T$-labeled partial order if each of its nodes 
are labeled with some element of $T$. Let $\varphi$ be an $\msopartialorders$ formula expressing a property of $T$-labeled 
partial orders, and let $c\in N$. We denote by $\partialorders(c,T,\varphi)$ the set of all $T$-labeled $c$-partial-orders 
satisfying $\varphi$. In this work the connection between the $c$-partial-order 
behavior of bounded $p/t$-nets and MSO logic will be established via a formalism 
called slice automaton (Section \ref{section:RegularSliceLanguages}). In the context 
of this paper, slice automata should be regarded as a generalization of message sequence 
graphs which is suitable for the representation of the $c$-partial-order behavior of bounded $p/t$-nets. 
Indeed, we will show that for each MSO formula $\varphi$, the set of all $c$-partial-orders 
satisfying $\varphi$ can be represented by a slice automaton. The connection 
with bounded $p/t$-nets stems from a result previously proved by us \cite{deOliveiraOliveira2013}
stating that the $c$-partial-order behavior of bounded $p/t$-nets can also be 
effectively represented via slice automata.

In the next five subsections we will state our main results and establish further 
connections with existing literature.

\vspace{-10pt}
\subsection{Verification of the Partial Order Behavior of Bounded $p/t$-Nets.}
\label{subsection:Verification}

Suppose we have a concurrent system modeled by a $b$-bounded $p/t$-net $N$ and let 
$\varphi$ be an MSO formula. In Theorem \ref{theorem:MSOAndPetriNets} below we address 
three verification results. First, assuming that $\varphi$ defines a set of faulty behaviors, 
we can mechanically determine whether or not some of the partial order runs of $N$ is faulty (Theorem \ref{theorem:MSOAndPetriNets}.\ref{MSOAndPetriNets-i}).
Second, on the contrapositive, assuming that $\varphi$ specifies a set of good partial order behaviors, we can 
test whether or not all behaviors of $N$ are good (Theorem \ref{theorem:MSOAndPetriNets}.\ref{MSOAndPetriNets-ii}). Third, 
assuming that $\varphi$ specify a set of desired partial order behaviors, we can decide whether the 
partial order behavior of $N$ comprises all partial orders specified by $\varphi$ (Theorem \ref{theorem:MSOAndPetriNets}.\ref{MSOAndPetriNets-iii}). 
All three verification results hold with respect with both the execution and the causal semantics. 
Below, we let the variable $sem$ be equal to $ex$ if we are considering the execution semantics and equal to $cau$ 
if we are considering the causal semantics. A precise definition of how partial orders are assigned to 
$p/t$-nets according to each of these semantics will be given in Section \ref{section:PetriNets}.  
\vspace{-5pt}
\begin{theorem}[Verification]
\label{theorem:MSOAndPetriNets} 
Let $\varphi$ be an $\msopartialorders$ formula, $\petriNet$ be a $b$-bounded $p/t$-net, $c\in \N$, and $sem \in \{ex,cau\}$.
\vspace{-5pt} 
\begin{enumerate}[i)]
	\item \label{MSOAndPetriNets-i} One may effectively determine whether $\partialorders_{sem}(N,c) \cap \partialorders(c,T,\varphi) = \emptyset$. 
	\item \label{MSOAndPetriNets-ii} One may effectively determine whether $\partialorders_{sem}(N,c) \subseteq \partialorders(c,T,\varphi)$. 
	\item \label{MSOAndPetriNets-iii} One may effectively determine whether $\partialorders(c,T,\varphi) \subseteq \partialorders_{sem}(N,c)$. 
\end{enumerate}
\end{theorem}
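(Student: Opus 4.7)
The plan is to reduce all three verification questions to standard decidability/closure questions about slice automata, which can then be settled using the automata-theoretic machinery promised in Section \ref{section:RegularSliceLanguages}.

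First I would invoke the two representation results mentioned in the introduction. By the result of \cite{deOliveiraOliveira2013}, from $\petriNet$, $c$, and the choice of semantics $sem\in\{ex,cau\}$ I can effectively construct a slice automaton $\automaton_{\petriNet}^{sem,c}$ whose language represents $\partialorders_{sem}(\petriNet,c)$. Dually, from the MSO formula $\varphi$ and the parameter $c$ I would build a slice automaton $\automaton_{\varphi}^{c}$ representing $\partialorders(c,T,\varphi)$; this is precisely the content of the MSO-to-slice-automaton translation that the paper sets up in Section \ref{section:RegularSliceLanguages}. At that point the three items of Theorem \ref{theorem:MSOAndPetriNets} become, respectively, an emptiness-of-intersection test, and two language-inclusion tests between slice automata, uniformly for both semantics.

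Next I would handle the three items using the closure properties of slice-automaton languages. For (\ref{MSOAndPetriNets-i}), I would form the product $\automaton_{\petriNet}^{sem,c}\times\automaton_{\varphi}^{c}$ and test its language for emptiness, which is decidable because slice automata admit a finite state space and a reachability-style emptiness check. For (\ref{MSOAndPetriNets-ii}) and (\ref{MSOAndPetriNets-iii}), I would complement the relevant slice automaton and then reduce to emptiness of a product with the complement; concretely, $\partialorders_{sem}(\petriNet,c)\subseteq\partialorders(c,T,\varphi)$ is equivalent to $\partialorders_{sem}(\petriNet,c)\cap\overline{\partialorders(c,T,\varphi)}=\emptyset$, and symmetrically for (\ref{MSOAndPetriNets-iii}). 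Since MSO on $c$-partial-orders is closed under negation, the complement side of the inclusion can in fact be obtained at the specification level by translating $\neg\varphi$ before applying the MSO-to-slice-automaton construction, which avoids having to complement the (potentially nondeterministic) slice automaton obtained from $\petriNet$.

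The main obstacle I expect is the well-known mismatch between a slice-decomposition representation and the partial orders it encodes: a single $c$-partial-order typically admits many distinct slice factorisations, so the natural intersection and complement constructions are correct only up to this redundancy. I would address this by working with the canonical-form/saturation machinery of slice theory (the objects denoted by $\cannonical$ and $\unitdecompositions$ in the macro preamble), ensuring that the slice automata $\automaton_{\petriNet}^{sem,c}$ and $\automaton_{\varphi}^{c}$ accept every slice factorisation of a given partial order. Once both automata are saturated in this sense, set-theoretic operations on their languages faithfully mirror the corresponding operations on the represented sets of $c$-partial-orders, and items (\ref{MSOAndPetriNets-i})--(\ref{MSOAndPetriNets-iii}) follow from the decidability of emptiness for slice automata.
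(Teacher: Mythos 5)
Your proposal follows essentially the same route as the paper: represent $\partialorders_{sem}(N,c)$ by the slice automaton of Theorem~\ref{theorem:RegularSliceLanguagesPetriNets}.\ref{theorem:Expressibility} and $\partialorders(c,T,\varphi)$ by the automaton of Lemma~\ref{lemma:SaturatedAutomatonPartialOrders}, then reduce the three questions to emptiness-of-intersection and inclusion of regular slice languages, with saturation guaranteeing (via Lemma~\ref{lemma:PropertiesSaturatedSliceLanguages}) that these syntactic tests faithfully reflect the represented partial-order languages. The one point you should make explicit is that saturation alone does not suffice: the paper also normalizes both automata to be \emph{transitively reduced}, so that each partial order is represented by unit decompositions of its Hasse diagram rather than of an arbitrary DAG with the same transitive closure --- your appeal to ``canonical-form'' machinery should be understood as including this normalization, since otherwise the product construction could miss a common partial order encoded by different DAGs in the two automata.
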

Notice that Theorems  \ref{theorem:MSOAndPetriNets}.\ref{MSOAndPetriNets-i} and \ref{theorem:MSOAndPetriNets}.\ref{MSOAndPetriNets-ii} 
can be reduced to each other 
since $\partialorders_{sem}(N,c) \cap \partialorders(c,T,\varphi) = \emptyset$ if and only if $\partialorders_{sem}(N,c) \subseteq  \partialorders(c,T,\neg\varphi)$.
Theorem \ref{theorem:MSOAndPetriNets} addresses for the first time safety and conformance tests of the both the execution 
and the  causal behaviors of general bounded $p/t$-nets. We notice that in the special case of {\em pure}\footnote{ A $p/t$-net is pure if no transition takes a token from a place and 
puts it in the same place.} bounded $p/t$-nets, the model checking of the causal behavior was addressed in \cite{AvellanedaMorin2013} using the machinery of vector addition systems with states.
In our notation, this corresponds to testing whether $\partialorders_{cau}(N,n\cdot b) \cap \partialorders(n\cdot b,T,\varphi) =\emptyset$, where $n$ is the number of places of 
$N$ and $b$ its bound. However, as pointed out in \cite{BadouelDarondeau1996a}, pure $p/t$-nets are a rather restricted subclass of $p/t$-nets, since they are 
not able to model for instance, waiting loops in communication protocols. The results in \cite{AvellanedaMorin2013} are not able to address the model checking of $p/t$-nets according to the 
execution semantics, nor to provide an analog of Theorem \ref{theorem:MSOAndPetriNets}.\ref{MSOAndPetriNets-iii} with respect to neither the causal nor the execution semantics. 

\vspace{-10pt}
\subsection{Synthesis of Bounded $p/t$-Nets from MSO Specifications}
\label{subsection:Synthesis}
\vspace{-5pt}
In our second result (Theorem \ref{theorem:MSOSynthesis}) we address the synthesis of $p/t$-nets from $\msopartialorders$ 
definable sets of $c$-partial-orders. We say that a $p/t$-net is $(b,r)$-bounded if each place occurs with multiplicity at most $r$ 
and if each place has at most $b$-tokens on each legal marking of $N$. 
Let $\mathit{sem}\in \{\mathit{ex},\mathit{cau}\}$. 
A $(b,r)$-bounded $p/t$-net $\petriNet$ is $c$-$\mathit{sem}$-minimal for a partial order 
language $\partialorders$, if $\partialorders \subseteq \partialorders_{\mathit{sem}}(\petriNet,c)$ and if there is no other $(b,r)$-bounded $p/t$-net 
$\petriNet'$ with $\partialorders \subseteq \partialorders_{\mathit{sem}}(\petriNet',c)\subsetneq \partialorders_{\mathit{sem}}(\petriNet,c)$. 
\begin{theorem}[Synthesis]
\label{theorem:MSOSynthesis} 
Let $\varphi$ be an $\msopartialorders$ formula, $T$ be a finite set of transitions, and $b,c,r \in\N$. 
Then one can effectively determine whether there exists a $(b,r)$-bounded $p/t$-net $\petriNet$ which is $c$-$\mathit{sem}$-minimal for 
$\partialorders(c,T,\varphi)$. In the case such a net $\petriNet$ exists one can effectively construct 
it. 
\end{theorem}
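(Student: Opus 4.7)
The plan is to reduce synthesis to a finite enumeration guided by the verification results of Theorem~\ref{theorem:MSOAndPetriNets}. The key observation is that for fixed $T$, $b$, and $r$, there are only finitely many $(b,r)$-bounded $p/t$-nets over $T$, up to isomorphism: each distinct place ``type'' is determined by its initial marking together with its arc weights to the finitely many transitions in $T$, all of which are bounded in terms of $b$; and each such place type may occur with multiplicity at most $r$. Hence the entire class can be effectively enumerated.

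First, by the MSO-to-slice-automaton translation developed in Section~\ref{section:RegularSliceLanguages}, construct a slice automaton $\automaton_\varphi$ recognizing $\partialorders(c,T,\varphi)$. For each enumerated $(b,r)$-bounded net $\petriNet$, the author's earlier result~\cite{deOliveiraOliveira2013} yields a slice automaton $\automaton_\petriNet$ recognizing $\partialorders_{\mathit{sem}}(\petriNet,c)$, for both $\mathit{sem}\in\{\mathit{ex},\mathit{cau}\}$. Using Theorem~\ref{theorem:MSOAndPetriNets}.\ref{MSOAndPetriNets-iii} we then filter the enumeration to the sub-collection $\classNets$ of nets satisfying $\partialorders(c,T,\varphi)\subseteq \partialorders_{\mathit{sem}}(\petriNet,c)$. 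If $\classNets=\emptyset$, output ``no''.

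Otherwise, I would identify a minimal element of $\classNets$ under the preorder induced by inclusion of $\partialorders_{\mathit{sem}}(\cdot,c)$. For this, strict inclusion between two regular slice languages must be decidable; this will follow from the fact that the class of regular slice languages is closed under boolean operations, so both $\subseteq$ and $\neq$ reduce to emptiness tests on slice automata. Comparing each pair in $\classNets$, a candidate $\petriNet$ is $c$-$\mathit{sem}$-minimal precisely when no other $\petriNet'\in\classNets$ satisfies $\partialorders_{\mathit{sem}}(\petriNet',c)\subsetneq \partialorders_{\mathit{sem}}(\petriNet,c)$. Because $\classNets$ is finite, a minimal element always exists and can be returned as the witness.

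The main obstacle will be ensuring that all the underlying machinery is in place and effective: finiteness of the $(b,r)$-bounded class (which crucially relies on the multiplicity bound $r$, without which the class would be infinite and no brute-force synthesis would be possible); the translation of MSO formulas into slice automata promised in Section~\ref{section:RegularSliceLanguages}; the effective construction of $\automaton_\petriNet$ from $\petriNet$ under both the execution and causal semantics; and the boolean closure plus decidability of emptiness and inclusion for slice automata. Once these ingredients are assembled, the synthesis algorithm becomes a straightforward finite search that simultaneously decides existence and, when applicable, produces a concrete $c$-$\mathit{sem}$-minimal net.
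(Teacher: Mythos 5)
Your argument is correct in outline but takes a genuinely different route from the paper. The paper's proof is two lines: Lemma~\ref{lemma:SaturatedAutomatonPartialOrders} converts $\varphi$ into a saturated, transitively reduced slice automaton $\automaton$ with $\lang_{\partialOrder}(\automaton)=\partialorders(c,T,\varphi)$, and then Theorem~\ref{theorem:RegularSliceLanguagesPetriNets}.\ref{theorem:Synthesis} --- the region-style synthesis result imported from \cite{deOliveiraOliveira2010,deOliveiraOliveira2012} --- is invoked as a black box to decide existence of, and construct, a $c$-$\mathit{sem}$-minimal $(b,r)$-bounded net for $\lang_{\partialOrder}(\automaton)$. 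You instead re-derive that black box by brute force: enumerate the candidate $(b,r)$-bounded nets, filter with Theorem~\ref{theorem:MSOAndPetriNets}.\ref{MSOAndPetriNets-iii} (which is proved independently of the synthesis theorem, so there is no circularity), and return a minimal element of the resulting finite preorder. What your route buys is self-containedness given only the expressibility and verification machinery; what it costs is an astronomically larger search and, more importantly, a dependence on a finiteness claim that the paper never needs.

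Two points in that claim need tightening. First, the class of $(b,r)$-bounded nets over $T$ is \emph{not} literally finite up to isomorphism: a transition that is never enabled may carry arbitrarily large arc weights without violating $b$-boundedness. What is true is that every transition that ever fires in a $b$-bounded net has all arc weights at most $b$, so one may cap all weights at $b+1$ without changing any achievable behavior $\partialorders_{\mathit{sem}}(N,c)$; together with a decidable $b$-boundedness check on each syntactic candidate, the enumeration is then complete for both existence and minimality. Second, deciding $\partialorders_{\mathit{sem}}(N',c)\subsetneq\partialorders_{\mathit{sem}}(N,c)$ does not follow from Boolean closure of regular slice languages alone: inclusion of the \emph{partial order} languages coincides with inclusion of the underlying \emph{slice} languages only for saturated, transitively reduced automata (Lemma~\ref{lemma:PropertiesSaturatedSliceLanguages}), and saturating an arbitrary slice automaton while preserving its partial order language is uncomputable. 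Your step goes through only because the automata supplied by Theorem~\ref{theorem:RegularSliceLanguagesPetriNets}.\ref{theorem:Expressibility} are already saturated and transitively reduced; that hypothesis is load-bearing and should be stated explicitly.
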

Theorem \ref{theorem:MSOSynthesis} addresses for the first time the synthesis of bounded $p/t$-nets from MSO 
specifications. Observe that the minimality condition imposed in theorem \ref{theorem:MSOSynthesis} implies 
that in the case that it is not possible to synthesize a net precisely matching the specification $\varphi$, 
it is still possible to synthesize a net with the fewest number of bad partial-order runs as possible. 
We observe that the parameter $r$ in Theorem \ref{theorem:MSOSynthesis} is only relevant when considering the 
causal semantics. Indeed, adding repeated places to $p/t$-nets does not change their execution behavior. However 
the addition of repeated places can indeed increase the causal behavior of $p/t$-net \cite{deOliveiraOliveira2010}.
We also observe that when considering the synthesis with the execution semantics all $c$-ex-minimal $p/t$-nets
for a partial order language $\partialorders$ have the same partial order behavior. However when considering 
the causal semantics, there may exist two $p/t$-nets $N_1$ and $N_2$ whose behavior is $c$-cau-minimal for $\partialorders$, but 
for which $\partialorders(N_1,c)\neq \partialorders(N_2,c)$. 

It is worth comparing Theorem \ref{theorem:MSOSynthesis} with existing literature. When considering the interleaving 
semantics of bounded $p/t$-nets, the synthesis problem from regular languages was studied extensively in \cite{BadouelDarondeau1996a,BadouelDarondeau1998,Darondeau1998,Darondeau2000} 
via a set of combinatorial techniques called {\em theory of regions}. Thus, via Buchi-Elgot Theorem stating that MSO Logic over strings 
is as expressive as regular languages \cite{Buchi1960}, the theory of regions 
can be used to synthesize nets whose 
interleaving behavior satisfies a given MSO formula over {\em strings}. In our notation this corresponds
to synthesizing a bounded net $N$ whose $1$-execution-behavior $\partialorders_{ex}(N,1)$ is $1$-ex-minimal with respect to 
$\partialorders(\varphi,T,1)$. Here we solve the synthesis problem from MSO languages for any $c\geq 1$, and with respect 
to both the causal and execution semantics. It is worth noting that the synthesis of bounded $p/t$-nets 
with the execution semantics from certain restricted partial order formalisms that are not able to represent the 
behavior of bounded $p/t$-nets was considered in \cite{BergenthumDeselLorenzMauserFundamenta2008,BergenthumDeselLorenzMauserInfinite2008}, but 
no connection with logic was established therein. The synthesis of bounded $p/t$-nets from a mathematical object 
that is able to fully represent the causal behavior of bounded $p/t$-nets was solved by us in \cite{deOliveiraOliveira2010}, 
solving in this way an open problem stated in \cite{JuhasLorenzDesel2005}. This mathematical object is called slice automaton, 
and will be described in Section \ref{section:RegularSliceLanguages}. The proof of Theorem \ref{theorem:MSOSynthesis} follows
by establishing a non-trivial connection between monadic second order logic and slice automata. 

\vspace{-10pt}
\subsection{Semantically Safest Subsystem}
\label{subsection:SemanticallySafestSubsystem}
\vspace{-5pt}
Suppose that we have in hands an MSO formula $\varphi$ specifying a set of safe behaviors, and 
a concurrent system specified by a $(b,r)$-bounded $p/t$-net $\petriNet$. 
Suppose that after verifying $N$ according to Theorem \ref{theorem:MSOAndPetriNets} we discover that some 
runs of $\petriNet$ are faulty, i.e., do not satisfy $\varphi$. What should we do? Discard $\petriNet$, and try to re-project a new system
from scratch? In the next theorem (Theorem \ref{theorem:SemanticallySafestSubsystem}) we will show that we may still be able to 
save $\petriNet$ by automatically synthesizing the best $(b,r)$-bounded $p/t$-net $\petriNet'$ whose partial order behavior lies in between
$\partialorders_{sem}(\petriNet,c)\cap \partialorders(\varphi,c)$ and $\partialorders_{sem}(\petriNet,c)$. In other words,
the partial order behavior of $N'$ is a subset of the partial order behavior of $N$ which preserves all safe runs of $N$. Additionally, 
the partial order behavior of $N'$ has as few unsafe partial-order runs as possible. 
We call $N'$ the {\em semantically safest subsystem} of $N$. We notice that the net $N'$ does not need to be 
a sub-net of $N$, and indeed $N'$ can have even more places than $N$. Only the behavior of $N'$ is guaranteed to be 
a subset of the behavior of $N$. 
\begin{theorem}[Semantically Safest Subsystem]
\label{theorem:SemanticallySafestSubsystem}
Let $c,b,r\in \N$ and $sem\in \{ex,cau\}$. Given a $(b,r)$-bounded $p/t$-net $\petriNet=(P,T)$ and an MSO formula $\varphi$, 
we may automatically synthesize a $(b,r)$-bounded $p/t$-net $\petriNet'$ such that 
\begin{enumerate}[i)]
	\item \label{SemanticallySafestSubsystem-i} $N'$ is $c$-$sem$-minimal for $\partialorders(c,T,\varphi)\cap \partialorders_{sem}(N,c)$,
	\item \label{SemanticallySafestSubsystem-ii} $\partialorders_{sem}(N',c)\subseteq \partialorders_{sem}(N,c)$. 
\end{enumerate}
\end{theorem}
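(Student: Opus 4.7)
The plan is to reduce the problem to Theorem~\ref{theorem:MSOSynthesis} via an intersection MSO formula, and then to enforce condition~\ref{SemanticallySafestSubsystem-ii} by a finite enumeration. Let $\mathcal{L} = \partialorders(c,T,\varphi) \cap \partialorders_{\mathit{sem}}(\petriNet,c)$. First I would argue that $\mathcal{L}$ is $\msopartialorders$-definable: by the result in \cite{deOliveiraOliveira2013}, $\partialorders_{\mathit{sem}}(\petriNet,c)$ admits a slice-automaton representation, and via the bridge between slice automata and $\msopartialorders$ developed in Section~\ref{section:RegularSliceLanguages} this slice automaton translates into an $\msopartialorders$ formula $\psi_{\petriNet}$. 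Consequently $\mathcal{L} = \partialorders(c,T,\varphi \wedge \psi_{\petriNet})$.

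A naive application of Theorem~\ref{theorem:MSOSynthesis} to $\varphi \wedge \psi_{\petriNet}$ already produces a $(b,r)$-bounded net $\petriNet''$ that is $c$-$\mathit{sem}$-minimal for $\mathcal{L}$, which suffices for~\ref{SemanticallySafestSubsystem-i} in isolation, but gives no guarantee for~\ref{SemanticallySafestSubsystem-ii}. To enforce both conditions simultaneously, I would exploit the fact that for fixed $T$, $b$, and $r$ there are only finitely many $(b,r)$-bounded $p/t$-nets over $T$ up to isomorphism: each place is determined by a pre/post profile in $\{0,\dots,b\}^{2|T|}$ and an initial marking in $\{0,\dots,b\}$, and each profile can occur at most $r$ times.

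The procedure enumerates this finite family of candidate nets $\petriNet'$ and, using Theorem~\ref{theorem:MSOAndPetriNets} applied to the formulas $\varphi \wedge \psi_{\petriNet}$ and $\psi_{\petriNet}$ respectively, retains those for which both $\mathcal{L} \subseteq \partialorders_{\mathit{sem}}(\petriNet',c)$ and $\partialorders_{\mathit{sem}}(\petriNet',c) \subseteq \partialorders_{\mathit{sem}}(\petriNet,c)$ hold. The resulting class is non-empty because $\petriNet$ itself qualifies. Among the survivors one picks any $\petriNet'$ whose partial-order behavior is minimal under set inclusion, a comparison that is again decidable via Theorem~\ref{theorem:MSOAndPetriNets} applied to the MSO encodings of the competing behaviors. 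Condition~\ref{SemanticallySafestSubsystem-ii} then holds by construction, and condition~\ref{SemanticallySafestSubsystem-i} follows because any $(b,r)$-bounded $\petriNet''$ witnessing $\mathcal{L} \subseteq \partialorders_{\mathit{sem}}(\petriNet'',c) \subsetneq \partialorders_{\mathit{sem}}(\petriNet',c)$ would automatically satisfy $\partialorders_{\mathit{sem}}(\petriNet'',c) \subseteq \partialorders_{\mathit{sem}}(\petriNet,c)$, thereby lying in the filtered class and contradicting the minimality of $\petriNet'$ within it.

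The main obstacle is not the enumeration but the translation between slice automata and $\msopartialorders$ required to produce $\psi_{\petriNet}$. This translation is the non-trivial bridge underpinning Theorems~\ref{theorem:MSOAndPetriNets} and~\ref{theorem:MSOSynthesis}, and once it is granted the remainder of the argument amounts to Boolean closure at the MSO level and repeated invocations of Theorem~\ref{theorem:MSOAndPetriNets} for decidable containment checks.
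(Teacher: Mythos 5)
The central step of your argument --- producing an $\msopartialorders$ formula $\psi_{\petriNet}$ with $\partialorders_{\mathit{sem}}(\petriNet,c)=\partialorders(c,T,\psi_{\petriNet})$ --- is a genuine gap. The paper only establishes the translation in the opposite direction: Lemma \ref{lemma:MSO-Regular} and Lemma \ref{lemma:SaturatedAutomatonPartialOrders} turn an MSO formula into a slice automaton, and Theorem \ref{theorem:RegularSliceLanguagesPetriNets}.\ref{theorem:Expressibility} turns a net into a slice automaton. Nowhere is it shown that the partial-order language of a slice automaton (or of a bounded $p/t$-net) is $\msopartialorders$-definable, and this is not a routine B\"uchi--Elgot-style converse: a slice automaton recognizes a language of unit decompositions, and passing from ``recognizable set of width-$c$ decompositions'' back to ``MSO-definable set of partial orders'' runs into exactly the saturation issues the paper is at pains to manage (indeed, saturating a slice automaton while preserving $\lang_{\partialOrder}$ is stated to be uncomputable). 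You also misattribute this bridge as underpinning Theorems \ref{theorem:MSOAndPetriNets} and \ref{theorem:MSOSynthesis}; those results only ever use the formula-to-automaton direction. Since both of your containment checks ($\mathcal{L}\subseteq\partialorders_{\mathit{sem}}(\petriNet',c)$ and $\partialorders_{\mathit{sem}}(\petriNet',c)\subseteq\partialorders_{\mathit{sem}}(\petriNet,c)$) are routed through $\psi_{\petriNet}$, the procedure as written cannot be executed.

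The paper avoids this entirely by staying at the level of saturated, transitively reduced slice automata: it builds $\automaton$ for $\partialorders(c,T,\varphi)$ (Lemma \ref{lemma:SaturatedAutomatonPartialOrders}) and $\automaton'$ for $\partialorders_{\mathit{sem}}(\petriNet,c)$ (Theorem \ref{theorem:RegularSliceLanguagesPetriNets}.\ref{theorem:Expressibility}), intersects and complements them using Lemma \ref{lemma:PropertiesSaturatedSliceLanguages}, and then invokes the Separation Lemma \ref{lemma:SeparationLemma}. Your secondary idea --- that there are only finitely many $(b,r)$-bounded nets over $T$ up to behavioral equivalence, so one can filter the candidates satisfying both containments and take a minimal survivor --- is sound, and your closing argument that minimality within the filtered class implies $c$-$\mathit{sem}$-minimality for $\mathcal{L}$ is correct (it even yields the existence claim more directly than the Separation Lemma does). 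But to make it work you must perform the containment tests with slice automata (Lemma \ref{lemma:PropertiesSaturatedSliceLanguages}(4)--(5) and Theorem \ref{theorem:RegularSliceLanguagesPetriNets}.\ref{theorem:Verification}), not with a hypothetical MSO encoding of the net's behavior.
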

We consider that our notion of {\em semantically safest subsystem} is appropriate for three reasons. First, as mentioned above, 
\ref{theorem:SemanticallySafestSubsystem}.\ref{SemanticallySafestSubsystem-i} and  \ref{theorem:SemanticallySafestSubsystem}.\ref{SemanticallySafestSubsystem-ii}
imply that $\partialorders(c,T,\varphi) \cap \partialorders_{sem}(N,c) \subseteq \partialorders(N',c) \subseteq \partialorders(N,c)$. 
Second, the minimality condition says that if there is a $(b,r)$-bounded $p/t$-net $N'$ whose $c$-partial-order behavior precisely matches 
$\partialorders_{sem}(N,c) \cap \partialorders(\varphi,c)$ then such a $p/t$-net will be returned. In this case, our synthesis algorithm completely 
corrects the original $p/t$-net. Finally, but not less important, if all $c$-partially-ordered runs of $\petriNet$ indeed satisfy $\varphi$, 
then our synthesis algorithm returns a net $\petriNet'$ satisfying $\partialorders_{sem}(\petriNet',c)=\partialorders_{sem}(\petriNet,c)$. 
Thus the set of $c$-partial order behaviors of the synthesized net does not change if the original net is already correct 
(although the structure of the net per si may change). In Subsection \ref{subsection:BehavioralRepair} below we consider a 
related problem that finds analogies with the field of automatic program repair. 

\vspace{-10pt}
\subsection{Behavioral Repair}
\label{subsection:BehavioralRepair}

During the last decade a substantial amount of effort has been devoted to the development of methodologies 
for the automatic correction of bugs in computational systems \cite{CernyHenzingerRadhakrishnaRyzhykTarrach2013,GriesmayerBloemCook2006,JobstmannGriesmayerBloem2005,SamantaDeshmokhEmerson2008}.
Very recently, in the context of reactive systems, Jobstmann and von Essen have combined system synthesis and model checking to 
develop a methodology of program repair that preserves semantically correct runs \cite{EssenJobstmann2013}. Within their
methodology, given two LTL formulas $\varphi$ and $\psi$ and a reactive system $S$, one is asked to automatically synthesize an 
system $S'$ whose behavior is lower bounded by $\lang(\varphi)\cap \lang(S)$ and upper bounded by $\lang(\psi)$. Intuitively, 
while $\varphi$ specifies a set of correct behaviors that should be preserved whenever present in the original system, the formula $\psi$ specifies 
the set of behaviors that are allowed to be present in the repaired system. In Theorem \ref{theorem:BehavioralRepair} below we 
transpose the semantically preserving repair methodology devised in \cite{EssenJobstmann2013} to the realm of bounded $p/t$-nets 
with the partial order semantics.

\begin{theorem}[Behavioral Repair]
\label{theorem:BehavioralRepair}
Let $c,b,r\in \N$ and $sem\in \{ex,cau\}$. Given a $(b,r)$-bounded $p/t$-net $\petriNet=(P,T)$ and an MSO formula $\varphi$, 
we may automatically determine whether there exists a $(b,r)$-bounded $p/t$-net $\petriNet'$ such that 
\begin{enumerate}[i)]
	\item \label{BehavioralRepair-i} $N'$ is $c$-$sem$-minimal for $\partialorders(c,T,\varphi)\cap \partialorders_{sem}(N,c)$,
	\item \label{BehavioralRepair-ii} $\partialorders_{sem}(N',c)\subseteq \partialorders_{sem}(c,T,\psi)$. 
\end{enumerate}
In the case such a net exists, one may automatically construct it. 
\end{theorem}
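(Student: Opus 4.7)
The plan is to extend the slice-automaton strategy used for Theorem \ref{theorem:SemanticallySafestSubsystem} by incorporating the additional upper-bound constraint imposed by $\psi$. Set $L := \partialorders(c,T,\varphi) \cap \partialorders_{sem}(N,c)$ and $U := \partialorders(c,T,\psi)$. The goal is to decide whether some $(b,r)$-bounded $p/t$-net $\petriNet'$ is both $c$-$sem$-minimal for $L$ (in the universe of all $(b,r)$-bounded nets over $T$) and satisfies $\partialorders_{sem}(\petriNet',c) \subseteq U$, and if so to produce such a net.

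First, I construct regular slice-language representations for all the relevant sets. Using the MSO-to-slice-automaton translation that underlies Theorems \ref{theorem:MSOAndPetriNets} and \ref{theorem:MSOSynthesis}, together with the fact from \cite{deOliveiraOliveira2013} that $\partialorders_{sem}(\petriNet,c)$ is itself regular, I obtain slice automata $\automaton_\varphi$, $\automaton_N$ and $\automaton_\psi$ for $\partialorders(c,T,\varphi)$, $\partialorders_{sem}(\petriNet,c)$ and $\partialorders(c,T,\psi)$ respectively. A standard product construction on slice automata then yields an automaton $\automaton_L$ for $L$. From this point onward every test I need reduces to an inclusion between regular slice languages, which is decidable by the procedure underlying Theorem \ref{theorem:MSOAndPetriNets}.

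Second, I enumerate candidate nets. Because each place of a $(b,r)$-bounded net is determined by its pre- and post-multisets over $T$ (each coordinate bounded by $b$), and each place-type may occur at most $r$ times, there are, up to isomorphism, only finitely many $(b,r)$-bounded $p/t$-nets over $T$. For each such $\petriNet'$ I compute a slice automaton $\automaton_{N'}$ for $\partialorders_{sem}(\petriNet',c)$ and I test (a) $\lang(\automaton_L) \subseteq \lang(\automaton_{N'})$, and (b) $\lang(\automaton_{N'}) \subseteq \lang(\automaton_\psi)$.

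Finally, among the nets passing test (a) I identify the $c$-$sem$-minimal ones by pairwise inclusion comparisons of their slice languages, and I return any such minimal net that also passes test (b); if none does, the answer is negative. The principal subtlety is that condition \ref{BehavioralRepair-i} demands minimality within the full class of $(b,r)$-bounded over-approximations of $L$, \emph{not} only within those that already lie inside $U$, so the filter by $U$ must be applied \emph{after} the antichain of minimal over-approximations of $L$ has been computed. For the execution semantics this antichain collapses to a single behavior, as already observed after Theorem \ref{theorem:MSOSynthesis}, so a single invocation of the synthesis algorithm would suffice; for the causal semantics, however, genuinely distinct $c$-cau-minimal behaviors may coexist, which is precisely what forces the full enumeration to be carried out in that case and constitutes the main technical obstacle.
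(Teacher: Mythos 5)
Your proposal is correct in substance but follows a genuinely different route from the paper. The paper also begins by building saturated, transitively reduced slice automata $\automaton_{\varphi}$ and $\automaton_{\psi}$ (Lemma \ref{lemma:SaturatedAutomatonPartialOrders}) and $\automaton'$ for $\partialorders_{sem}(N,c)$ (Theorem \ref{theorem:RegularSliceLanguagesPetriNets}.\ref{theorem:Expressibility}), and forms the product $\automaton_{\varphi}\cap \automaton'$ for $L$ via Lemma \ref{lemma:PropertiesSaturatedSliceLanguages}; but it then delegates everything else to the Separation Lemma (Lemma \ref{lemma:SeparationLemma}), i.e., to the black-box synthesis result of Theorem \ref{theorem:RegularSliceLanguagesPetriNets}.\ref{theorem:Synthesis} plus a single disjointness test, where the upper bound ``$\subseteq \partialorders(c,T,\psi)$'' should be recast as disjointness from the $c$-complement $\overline{\automaton_{\psi}}^c$ of Definition \ref{definition:cComplement} (the paper's text passes $\automaton_{\psi}$ itself to the lemma, a slip you avoid by testing containment directly). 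You instead re-derive the synthesis step by exhaustively enumerating candidate $(b,r)$-bounded nets and comparing slice languages. This buys one real advantage: you compute the whole antichain of $c$-$sem$-minimal over-approximations of $L$ before filtering by $U$, which correctly handles the causal case where several minimal nets with incomparable behaviors coexist --- a case the one-shot invocation inside the paper's Separation Lemma treats rather tersely. The costs are, first, that you are re-proving a result the paper imports, and second, that your finiteness claim is stated too strongly: there are infinitely many $(b,r)$-bounded nets up to isomorphism (a place may assign a transition an arbitrarily large take-value, rendering it dead), so you must argue instead that only finitely many $c$-$sem$-behaviors arise, each realized by a net whose place coordinates are bounded by a computable function of $b$ --- essentially the region-theoretic content of Theorem \ref{theorem:RegularSliceLanguagesPetriNets}.\ref{theorem:Synthesis}. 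Finally, make explicit that your slice-language inclusion tests stand in for partial-order-language inclusions only because the automata involved are saturated and transitively reduced, as required by Lemma \ref{lemma:PropertiesSaturatedSliceLanguages}.
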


While \ref{theorem:BehavioralRepair}.\ref{BehavioralRepair-i} and \ref{theorem:BehavioralRepair}.\ref{BehavioralRepair-ii} 
imply that $\partialorders(c,T,\varphi)\cap \partialorders_{sem}(N,c) \subseteq \partialorders(N',c) \subseteq \partialorders(c,T,\psi)$, 
the minimality condition in \ref{theorem:BehavioralRepair}.\ref{BehavioralRepair-i} implies that if $N'$ is successfully synthesized, then its 
behavior has as few partial-order runs contradicting $\varphi$ as possible.

\vspace{-10pt}
\subsection{Synthesis from Partial Order Contracts}
\label{subsection:AutomatedRepair}

Suppose that we are in the early stages of development of a concurrent system. We have arrived 
to the conclusion that every behavior satisfying a given MSO formula $\varphi^{\mathit{yes}}$ should be present 
in the system, but that no behavior in the system should satisfy a formula $\varphi^{\mathit{no}}$. 
Clearly we require that $\partialorders(\varphi^{yes}) \cap \partialorders(\varphi^{no})=\emptyset$. 
We say that the pair $(\varphi^{yes},\varphi^{no})$ is a partial order contract. 
We can try to develop a first prototype of our system by automatically synthesizing a $(b,r)$-bounded 
$p/t$-net $N$ containing all $c$-partial orders specified by $\varphi^{yes}$ but no partial order in $\varphi^{no}$.
The next theorem says that if such a net exists, then it can be automatically constructed. 

\begin{theorem}[Synthesis from Contracts]
\label{theorem:Contract}
Let $\varphi^{yes}$ and $\varphi^{no}$ be $\msopartialorders$ formulas with $\partialorders(c,T,\varphi^{yes})\cap \partialorders(c,T,\varphi^{no})=\emptyset$. 
Then one may automatically determine whether there exists a $(b,r)$-bounded 
$p/t$-net $\petriNet$ such that $\partialorders(c,T,\varphi^{yes}) \subseteq \partialorders_{sem}(\petriNet,c)$ and 
$\partialorders(c,T,\varphi^{no})\cap \partialorders_{sem}(\petriNet,c) = \emptyset$. In the case such a net exists one 
may construct it.  
\end{theorem}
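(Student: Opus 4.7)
The plan is to reduce the contract-synthesis problem to a combination of Theorem~\ref{theorem:MSOSynthesis} (MSO synthesis) and Theorem~\ref{theorem:BehavioralRepair} (behavioral repair), so that the slice-automaton machinery is invoked only implicitly through their subroutines.

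First I would apply Theorem~\ref{theorem:MSOSynthesis} to the formula $\varphi^{\mathit{yes}}$ with parameters $b,c,r$ and transition set $T$. This either produces a $c$-$\mathit{sem}$-minimal $(b,r)$-bounded $p/t$-net $N_0$ with $\partialorders(c,T,\varphi^{\mathit{yes}}) \subseteq \partialorders_{\mathit{sem}}(N_0,c)$, or it certifies that no $(b,r)$-bounded $p/t$-net contains $\partialorders(c,T,\varphi^{\mathit{yes}})$ in its behavior at all. In the latter case no net can possibly satisfy the contract, and the algorithm outputs ``no solution''. Assuming that $N_0$ is obtained, the inclusion $\partialorders(c,T,\varphi^{\mathit{yes}}) \subseteq \partialorders_{\mathit{sem}}(N_0,c)$ implies that $\partialorders(c,T,\varphi^{\mathit{yes}}) \cap \partialorders_{\mathit{sem}}(N_0,c) = \partialorders(c,T,\varphi^{\mathit{yes}})$, so I would invoke Theorem~\ref{theorem:BehavioralRepair} on the instance $(N_0,\;\varphi := \varphi^{\mathit{yes}},\;\psi := \neg\varphi^{\mathit{no}})$. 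If Theorem~\ref{theorem:BehavioralRepair} returns a net $N'$, its clause (i) guarantees $\partialorders(c,T,\varphi^{\mathit{yes}}) \subseteq \partialorders_{\mathit{sem}}(N',c)$ and its clause (ii) guarantees $\partialorders_{\mathit{sem}}(N',c) \subseteq \partialorders(c,T,\neg\varphi^{\mathit{no}})$, which is exactly disjointness from $\partialorders(c,T,\varphi^{\mathit{no}})$; the algorithm outputs $N'$. Otherwise it outputs ``no solution''.

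The main obstacle, and the content of the correctness argument, is to establish soundness of the ``no solution'' output, especially under the causal semantics, where (as remarked after Theorem~\ref{theorem:MSOSynthesis}) distinct $c$-$\mathit{cau}$-minimal nets for the same language may exhibit different behaviors, so that fixing the wrong minimal $N_0$ in step one could in principle shadow a correct choice. Suppose for contradiction that some $(b,r)$-bounded net $N^*$ satisfies the contract, so that $\partialorders(c,T,\varphi^{\mathit{yes}}) \subseteq \partialorders_{\mathit{sem}}(N^*,c) \subseteq \partialorders(c,T,\neg\varphi^{\mathit{no}})$. For fixed $b$, $r$ and $T$ there are only finitely many $(b,r)$-bounded $p/t$-nets up to isomorphism, because arc weights and initial markings can be taken bounded by $b$ and each place type occurs with multiplicity at most $r$. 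Therefore we may iteratively replace $N^*$ by any $(b,r)$-bounded net of strictly smaller $c$-$\mathit{sem}$-behavior that still contains $\partialorders(c,T,\varphi^{\mathit{yes}})$; this finite descent terminates at a $c$-$\mathit{sem}$-minimal net $N^{**}$ for $\partialorders(c,T,\varphi^{\mathit{yes}})$ whose behavior, being contained in $\partialorders_{\mathit{sem}}(N^*,c)$, is still contained in $\partialorders(c,T,\neg\varphi^{\mathit{no}})$. Such an $N^{**}$ witnesses the existence claim in Theorem~\ref{theorem:BehavioralRepair}, contradicting its negative answer. Hence no contract-satisfying net exists and the procedure is sound in both semantics.
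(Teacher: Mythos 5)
Your reduction is correct, and it reaches the same computational core as the paper by a slightly different route. The paper's proof builds the saturated, transitively reduced slice automata $\automaton^{\mathit{yes}}$ and $\automaton^{\mathit{no}}$ directly via Lemma~\ref{lemma:SaturatedAutomatonPartialOrders} and then invokes Theorem~\ref{theorem:RegularSliceLanguagesPetriNets}.\ref{theorem:Synthesis} together with the Separation Lemma (Lemma~\ref{lemma:SeparationLemma}) on the pair $(\automaton^{\mathit{yes}},\automaton^{\mathit{no}})$; you instead treat Theorems~\ref{theorem:MSOSynthesis} and~\ref{theorem:BehavioralRepair} as black boxes, feeding the net $N_0$ from the first into the second with $\psi:=\neg\varphi^{\mathit{no}}$ so that the intersection $\partialorders(c,T,\varphi^{\mathit{yes}})\cap\partialorders_{\mathit{sem}}(N_0,c)$ collapses to $\partialorders(c,T,\varphi^{\mathit{yes}})$. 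Unrolling your two subroutines recovers essentially the paper's argument (Theorem~\ref{theorem:BehavioralRepair} is itself proved via the Separation Lemma), so nothing genuinely new is needed on the slice-automaton side; the detour through $N_0$ is harmless since the language passed to the minimality clause is independent of which containing net is chosen. What your write-up adds, and what the paper leaves implicit, is the justification that deciding the \emph{minimal-and-separated} question suffices for the theorem as stated, which imposes no minimality: your finite-descent argument showing that any contract-satisfying net dominates a $c$-$\mathit{sem}$-minimal one whose behavior is still disjoint from $\partialorders(c,T,\varphi^{\mathit{no}})$ is exactly the missing bridge, and it is also the right place to worry about the causal semantics, where minimal behaviors need not be unique. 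Two small points to tighten: the finiteness of $(b,r)$-bounded nets up to behavioral equivalence deserves a sentence (arc weights of transitions that ever fire are forced below $b$ by boundedness, and non-firing transitions can be normalized), and you should note explicitly that $\partialorders_{\mathit{sem}}(N',c)\subseteq\partialorders(c,T)$, so that inclusion in $\partialorders(c,T,\neg\varphi^{\mathit{no}})$ really is equivalent to disjointness from $\partialorders(c,T,\varphi^{\mathit{no}})$. Neither is a genuine gap.
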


\section{p/t-Nets and their Partial Order Semantics} 
\label{section:PetriNets}

Let $T$ be a finite set of transitions. Then a place over $T$ is a triple
$p=(p_0,\putsp,\takesp)$ where $p_0$ denotes the initial number of tokens in
$p$ and $\putsp,\takesp:T\rightarrow \N$ are functions which denote the number
of tokens that each transition $t\in T$ respectively puts in and takes from $p$. A
$p/t$-net over $T$ is a pair $N=(P,T)$ where $T$ is a set of transitions and
$P$ a finite multi-set of places over $T$. We assume through this paper that
for each transition $t\in T$, there exist places $p_1,p_2\in P$ for which
$\putsp_1(t) >  0$ and $\takesp_2(t) > 0$. A marking of $N$ is a function
$m:P\rightarrow \N$. A transition $t$ is enabled at marking $m$ if $m(p)\geq
\takesp(t)$ for each $p\in P$. The occurrence of an enabled transition at
marking $m$ gives rise to a new marking $m'$ defined as $m'(p)=m(p) -\takesp(t)
+ \putsp(t)$. The initial marking $m_0$ of $N$ is given by $m_0(p)=p_0$ for
each $p\in P$. A sequence of transitions $t_1t_1...t_{n}$ is an occurrence
sequence of $N$ if there exists a sequence of markings $m_0m_1...m_n$ such that
for each $i\in \{1,...,n\}$, $t_i$ is enabled at $m_{i-1}$ and if $m_{i}$ is obtained by the firing of $t_i$
at marking $m_{i-1}$. A marking $m$ is legal if it is obtained from $m_0$ by the 
firing of an occurrence sequence of $N$. A place $p$ of $N$ is $b$-bounded if $m(p)\leq b$
for each legal marking $m$ of $N$. A net $N$ is $b$-bounded if each of its places
is $b$-bounded. The union of two $p/t$-nets $N_1=(P_1,T)$ and $N_2=(P_2,T)$ having 
a common set of transitions $T$  is the $p/t$-net $N_1\cup N_2=(P_1 \cup P_2, T)$.
Observe that since we are dealing with the union of multisets, if a place $p$ occurs
with multiplicity $r_1$ in $P_1$ and with multiplicity $r_2$ in $P_2$ then the same 
place will occur with multiplicity $r_1+r_2$ in $P_1\cup P_2$. 

The notion {\em process}, upon which the partial order semantics of $p/t$-nets 
is derived, is defined in terms of objects called {\em occurrence nets}. An 
occurrence net is a DAG $O=(B\dot{\cup} V, F)$ where the vertex set $B\dot{\cup} V$
is partitioned into a set $B$, whose elements are called conditions, and 
a set $V$, whose elements are called events. The edge set $F \subseteq (B\times V)\cup (V\times B)$  is restricted in such 
a way that for every condition $b\in B$, 
$$|\{(b,v)\;|\;v\in V\}|\leq 1 \hspace{0.5cm} \mbox{ and } \hspace{0.5cm} |\{(v,b)\;|\; v\in V\}|\leq 1.$$
In other words, conditions in an occurrence net are unbranched. For a condition $b\in B$ we 
let $InDegree(b)$ denote the number of edges having $b$ as target. A process of a $p/t$-net 
is an occurrence net in which conditions are labeled with places of $N$ and events are
labeled with transitions of $N$ in such a way that the number of conditions labeled by a 
place $p\in N$ which immediately precede (follows) an event labeled by a transition $t$ is equal to $\takesp(t)$ ($\putsp(p)$). 
We define processes more precisely below.

\begin{definition}[Process \cite{GoltzReisig1983}] \label{definition:Process} A process of a $p/t$-net
$N=(P,T)$ is a labeled DAG $\process=(B\dot{\cup}V,F,\rho)$ where $(B\dot{\cup} V,F)$ is 
an occurrence net and $\rho:(B\cup V)\rightarrow (P\cup T)$ is a labeling function 
satisfying the following properties.
\begin{enumerate} 
	\item \label{process:itemThree} Places label conditions and transitions label events. 
		$$\rho(B)\subseteq P \hspace{1cm} \rho(V)\subseteq T$$ 
	\item \label{process:itemFour} For every $v\in V$, and every $p\in P$,
	$$|\{(b,v)\in F : \rho(b)\!=\!p\}|=\takesp(\rho(v))\hspace{0.2cm}\mbox{ and }\hspace{0.2cm} |\{(v,b)\in F : \rho(b)\!=\!p\}|=\putsp(\rho(v))$$
	\item For every $p\in P$, $$|\{b| \mathit{InDegree(b)}=0, \rho(b)=p \}|= p_0 .$$ 
 \end{enumerate}
\end{definition}

Let $R\subseteq X\times X$ be a binary relation on a set $X$. We 
denote by $\transitiveClosure(X)$ the transitive closure of $R$. If $\process=(B\cup V,F,\rho)$ 
is a process then the {\em causal order} of $\process$ is the partial order $\apartialorder_{\process}=(V,\transitiveClosure(F)|_{V\times V},\rho|_{V})$
which is obtained by taking the transitive closure of $F$ and subsequently by restricting $\transitiveClosure(F)$ to pairs of events of $V$.
In other words the causal order of a process $\process$ is the partial order induced by $\process$ on its events. We denote by 
$\partialorders_{\mathit{cau}}(\petriNet)$ the set of all partial orders derived from processes of $\petriNet$. We say that 
$\partialorders_{\mathit{cau}}(N)$ is the causal language of $N$.  

$$\partialorders_{\mathit{cau}}(N) = \{\apartialorder_{\process} | \process \mbox{ is a process of $N$}\}$$

Observe that several processes of a $p/t$-net $N$ may correspond to the same partial order. 
A sequentialization of a partial order $\apartialorder$ is any partial order $\apartialorder'=(V,<',l)$ for which $<\subseteq <'$. 
If $\petriNet$ is a $p/t$-net then an {\em execution} of $\petriNet$ is any sequentialization of 
 a causal order in $\partialorders_{\mathit{cau}}(N)$. We denote by $\partialorders_{\mathit{ex}}(N)$ the set 
of all executions of $N$. 

$$\partialorders_{\mathit{ex}}(N) = \{\apartialorder| \mbox{ $\apartialorder$ is a sequentialization of a 
causal order in $\partialorders_{\mathit{cau}}(N)$}\}.$$

We denote by $\partialorders_{\mathit{ex}}(N,c)$ the set of all $c$-partial orders in $\partialorders_{\mathit{ex}}(N)$
and by $\partialorders_{\mathit{cau}}(N,c)$ the set of all $c$-partial orders in $\partialorders_{\mathit{cau}}(N)$.  
We notice that when considering the execution semantics of a $b$-bounded $p/t$-net $N=(P,T)$, the 
set $\partialorders_{ex}(N,1)$ is simply the set of all occurrence sequences of $N$. 
Additionally, the inclusion $\partialorders_{ex}(N,c)\subseteq \partialorders_{ex}(N,c+1)$ may 
be proper for infinitely many values of $c$. In other words the execution behavior of a $p/t$-net may increase 
infinitely often with an increase in the parameter $c$. On the other hand, when considering the causal semantics 
of $N$, it can be shown  \cite{deOliveiraOliveira2010} that $$\partialorders_{cau}(N,b\cdot |P|)=\partialorders_{cau}(N,b\cdot |P|+i) = \partialorders_{cau}(N)$$ for any $i\in \N$.
Thus the causal behavior of a $b$-bounded $p/t$-net stabilizes for $c=b\cdot |P|$.

\section{Regular Slice Languages}
\label{section:RegularSliceLanguages}
A slice $\boldS=(V,E,l,s,t)$ is a DAG\footnote{A generalization of slices to arbitrary digraphs was considered in \cite{deOliveiraOliveira2013}, but
in this work we are only interested in slices that give rise to DAGs.} where $V=I\dot{\cup} C\dot{\cup} O$ is a set of 
vertices partitioned into an in-frontier $I$, a center $C$ and an out-frontier $O$; $E$ is a set of edges, $s,t:E\rightarrow V$ are functions 
that associate to each edge $e\in E$ a source vertex $e^s$ and a target vertex $e^t$, and $l:V\rightarrow \alphabettransitions \cup \N$
is a function that labels the center vertices in $C$ with elements of a finite set $T$, and the in- and out-frontier vertices with 
positive integers in such a way that $l(I)=\{1,...,|I|\}$ and $l(O)=\{1,...,|O|\}$. 
Additionally, we require that each frontier-vertex $v$ in $I\cup O$ is the endpoint of exactly one edge $e\in E$ and that 
no edge has both endpoints in the same frontier. Finally, in this work, we consider that the edges are directed from the in-frontier to the 
out frontier. In other words, for each edge $e\in E$, $e^s\in I\cup C$ and $e^{t}\in C\cup O$. From now on we will omit the 
source and target functions $s$ and $t$ from the specification of a slice and write simply $\boldS=(V,E,l)$. 

A slice $\boldS_1=(V_1,E_1,l_1)$ with frontiers $(I_1,O_1)$ can be glued to a slice $\boldS_2=(V_2,E_2,l_2)$ with frontiers $(I_2,O_2)$
provided $|O_1|=|I_2|$. In this case the glueing gives rise to the slice $\boldS_1\circ \boldS_2$ with 
frontiers $(I_1,O_2)$ which is obtained by fusing, for each $i\in \{1,...,|O_1|\}$, the unique edge $e_1\in E_1$ 
for which $l_1(e_1^t)=i$ with the unique edge $e_2\in E_2$ for which $l_2(e_2^s)=i$. 
Formally, the fusion of $e_1$ with $e_2$ proceeds as follows. First we create an edge $e_{12}$. Then we set 
$e_{12}^s=e_1^s$ and $e_{12}^{t}=e_2^t$. Finally we delete both $e_1$ and $e_2$. Thus in the glueing process 
the vertices in the glued frontiers disappear. 

A {\em unit slice} is a slice with exactly one vertex in its center. A slice is {\em initial} if
it has empty in-frontier and {\em final}  if it has empty out-frontier. The width of a slice $\boldS$ 
with frontiers $(I,O)$ is defined as $\width(\boldS) = \max\{|I|,|O|\}$. If $\labelAlphabet$ is a 
finite alphabet of symbols, then we let $\directedslicealphabet(c,\labelAlphabet)$ be the set of all 
unit slices of width at most $c$ whose unique center vertex is labeled with an element of $\labelAlphabet$. 
Observe that  $\directedslicealphabet(c,\labelAlphabet)$ is finite and has asymptotically $|\alphabettransitions|\cdot 2^{O(c\log c)}$
slices. We let $\directedslicealphabet(c,\alphabettransitions)^*$ denote the free monoid generated by $\directedslicealphabet(c,\alphabettransitions)$. We should 
{\it emphasize} that at this point the operation of the free monoid is simply the concatenation $\boldS\boldS'$ of 
slices and {\em should not be confused} with the composition $\boldS\circ \boldS'$. 
Thus the elements of $\directedslicealphabet(c,\alphabettransitions)^*$ are simply sequences $\boldS_1\boldS_2...\boldS_n$ of 
slices regarded as dumb letters. Additionally, the identity element of this monoid is simply the empty 
string $\lambda$, for which $\lambda\boldS=\boldS=\boldS\lambda$. 

We let $\lang(\directedslicealphabet(c,\alphabettransitions))$ be the set of all sequences of slices 
$\boldS_1\boldS_2...\boldS_n \in \directedslicealphabet(c,\alphabettransitions)^*$ 
for which $\boldS_i$ can be composed with $\boldS_{i+1}$ for $i\in \{1,...,n-1\}$, and for which $\boldS_1$ is initial and 
$\boldS_n$ is final. We call the elements of $\lang(\directedslicealphabet(c,\alphabettransitions))$ {\em unit decompositions}. 
If $\boldU = \boldS_1\boldS_2...\boldS_n$ is a unit decomposition in $\lang(\directedslicealphabet(c,\alphabettransitions))$ then we 
denote by $\composedU = \boldS_1\circ \boldS_2 \circ ... \circ \boldS_n$ the DAG obtained from $\boldU$ by composing 
all of its slices. A {\em slice language} over $\directedslicealphabet(c,\alphabettransitions)$ is any subset of 
$\lang(\directedslicealphabet(c,\alphabettransitions))$. The width $\width(\boldU)$ of a unit decomposition $\boldU=\boldS_1\boldS_2...\boldS_n$
is the maximum width of a slice occurring in $\boldU$: $\width(\boldU) = \max_{i} \width(\boldS_i)$. 
Each slice language $\lang$ represents a possibly infinite family of DAGs $\lang_{\graph}$ which is obtained by composing the slices in each unit decomposition in $\lang$. 

\begin{equation}
\label{equation:GraphLanguage}
\lang_{\graph} =  \{\composedU | \boldU \in \lang\}
\end{equation}

Additionally, $\lang$ also represents a possibly infinite family of partial orders $\lang_{\partialOrder}$
which is obtained by taking the transitive closure of each DAG in $\lang_{\graph}$. 

\begin{equation}
\label{equation:PartialOrderLanguage}
\lang_{\partialOrder} = \{\transitiveClosure(\composedU)\; |\;  \boldU \in \lang\}
\end{equation}

A slice language $\lang\subseteq \lang(\directedslicealphabet(c,\alphabettransitions))$ is regular if 
it can be defined by a finite automaton $\automaton$ over the slice alphabet $\directedslicealphabet(c,\alphabettransitions)$.

\begin{definition}[Slice Automaton]
\label{definition:SliceAutomaton}
Let $T$ be a finite set of symbols and let $c\in \N$. A slice automaton over a slice alphabet $\directedslicealphabet(c,T)$ is a finite automaton $\mathcal{A}=(Q,\Delta,q_0,F)$
where $Q$ is a set of states, $q_0\in Q$ is an initial state, $F\subseteq Q$ is a set of final states, and 
$\Delta\subseteq Q\times \directedslicealphabet(c,T) \times Q$ is a transition relation such that for 
every $q,q',q''\in Q$ and every $\boldS\in \directedslicealphabet(c,T)$:
\begin{enumerate}
	\item if $(q_0,\boldS,q)\in \Delta$ then $\boldS$ is an initial slice, 
	\item if $(q,\boldS,q')\in \Delta$ and $q'\in F$, then $\boldS$ is a final slice,
	\item if $(q,\boldS,q')\in \Delta$ and $(q',\boldS',q'')\in \Delta$, then $\boldS$ can be glued to $\boldS'$. 
\end{enumerate}
\end{definition} 

We denote by $\lang(\automaton)$ the slice language accepted by $\automaton$. We denote by 
$\lang_{\graph}(\automaton)$ and $\lang_{\partialOrder}(\automaton)$ respectively the set 
of DAGs derived from unit decompositions in $\lang(\automaton)$ and the set of partial orders 
obtained by taking the transitive closure of DAGs in $\lang_{\graph}(\automaton)$.

\section{Saturated and Transitively Reduced Slice Languages}
\label{section:SaturatedAndTransitiveReducedSliceLanguages}

Let $H$ be a DAG whose vertices are labeled with elements from a finite set $T$.
Then we let $\unitdecompositions(H,c)$ denote the set of all unit decompositions
$\boldU$ in $\lang(\directedslicealphabet(c,T))$ for which $\composedU = H$. 
The {\em set of all unit decompositions} of $H$ is defined as 

\begin{equation}
\label{equation:UnitDecompositions}
\unitdecompositions(H)=\bigcup_{c\geq 0} \unitdecompositions(c,T)
\end{equation}

We say that a slice language $\lang$ over $\directedslicealphabet(c,T)$ is {\em saturated} if for every DAG $H\in \lang_{\graph}$ 
we have that $\unitdecompositions(H)\subseteq \lang$. Notice that if a slice language $\lang$ over $\directedslicealphabet(c,T)$ 
is saturated, then for any $H\in \lang_{\graph}$ we have that $\unitdecompositions(H,c)=\unitdecompositions(H,c')$ for any $c'\geq c$. 
Let $H$ be a DAG. An ordering $\omega=(v_1,v_2,...,v_n)$ of the vertices of $H$ is a {\em topological ordering} if 
for any $i,j$ with $1\leq i<j\leq n$, there is no edge of $H$ whose source is $v_j$ and whose target is $v_i$. In other words,
in a topological ordering, the target of an edge has always a greater position in the ordering than its source. 
Notice that if $\boldU=\boldS_1\boldS_2...\boldS_n$ is a unit decomposition of a DAG $H$, and if $v_i$ is the center vertex of $\boldS_i$, 
then the ordering $\omega = (v_1,v_2,...,v_n)$ is always a topological ordering of $H$. We say that $\boldU$ is compatible with $\omega$. 
Conversely, given any topological ordering $\omega$ of $H$ there exists at least one unit decomposition $\boldU$ of $H$ that is compatible with 
$\omega$. We denote by $\unitdecompositions(H,\omega)$ the set of all unit decompositions of $H$ that are compatible with $\omega$. Notice 
that $\unitdecompositions(H) =  \bigcup_{\omega} \unitdecompositions(H,\omega)$ where $\omega$ ranges over all topological orderings 
of $H$. We say that a slice language is vertically saturated if for every $H\in \lang_{\graph}$ and every topological ordering $\omega$
of $H$, $\unitdecompositions(H,\omega)\cap \lang \neq \emptyset$ implies that $\unitdecompositions(H,\omega)\subseteq \lang$. Notice that a slice language 
may be vertically saturated without being saturated. 
In general, deriving from a slice automaton $\automaton$ a slice automaton $\automaton'$ such that 
$\lang(\automaton')$ is saturated and such that $\lang_{\partialOrder}(\automaton')= \lang_{\partialOrder}(\automaton)$  
is an uncomputable problem \cite{deOliveiraOliveira2010}. However it is always possible to derive from 
$\automaton$ a slice automaton $\automaton''$ such that $\lang(\automaton'')$
is vertically saturated and such that $\lang_{\partialOrder}(\automaton'') = \lang_{\partialOrder}(\automaton)$ \cite{deOliveiraOliveira2012}. 
We say that a slice automaton $\automaton$ is saturated if $\lang(\automaton)$ is saturated. We 
say that $\automaton$ is vertically saturated if $\lang(\automaton)$ is vertically saturated. 

The transitive reduction of a DAG $H=(V,E,l)$ is the minimal subgraph $\transitiveReduction(H)$ of $H$ with the same 
transitive closure as $H$. In other words $\transitiveClosure(\transitiveReduction(H))=\transitiveClosure(H)$. We say that 
a DAG $H$ is transitively reduced if $H=\transitiveReduction(H)$. Alternatively, we say that a transitively reduced DAG is a 
Hasse diagram. We say that a slice language $\lang$ is transitively reduced if 
every DAG in $\lang_{\graph}$ is transitively reduced. 
The transitive reduction of a slice language $\lang$ is the unique slice language 
$\transitiveReduction(\lang)$ which is transitively reduced, vertically saturated and such that for each DAG $H\in \lang_{\graph}$ 
and each topological ordering $\omega$ of $H$, $\unitdecompositions(H,\omega)\cap \lang\neq \emptyset$ implies that $\unitdecompositions(\transitiveReduction(H),\omega)\cap \lang \neq \emptyset$. 
Notice that by our definition of transitive reduction, if $\lang$ is saturated, then $\transitiveReduction(\lang)$ is also saturated.
We say that a slice automaton $\automaton$ is transitively reduced if $\lang(\automaton)$ is transitively reduced.  

\begin{lemma}[Transitive Reduction of Slice Languages \cite{deOliveiraOliveira2012}]
\label{lemma:TransitiveReductionSliceLanguages}
Let $\lang$ be a regular slice language represented by a finite automaton $\automaton$ over $\directedslicealphabet(c,T)$. Then 
there exists a finite automaton $\transitiveReduction(\automaton)$ on $2^{O(c\log c)}\cdot|\automaton|$ states with 
$\lang(\transitiveReduction(\automaton)) = \transitiveReduction(\lang)$. 
\end{lemma}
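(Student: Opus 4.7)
The plan is to build $\transitiveReduction(\automaton)$ as a product of a vertically saturated version of $\automaton$ with an auxiliary reachability-tracking automaton $\mathcal{R}$ over the same slice alphabet $\directedslicealphabet(c,T)$. First I would invoke the prior result cited in the paper to assume without loss of generality that $\automaton$ is already vertically saturated, paying at most a $2^{O(c\log c)}$ blow-up in state count. Vertical saturation ensures that, for every $H\in \lang_{\graph}(\automaton)$ and every topological ordering $\omega$ of $H$, $\automaton$ accepts \emph{all} unit decompositions of $H$ compatible with $\omega$, not just some; this is essential so that whatever reduction choices we make slice-by-slice below can always be matched by an accepting run of $\automaton$.

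Next I would design $\mathcal{R}$ to carry in its state a Boolean reachability matrix $R$ on the at-most-$c$ active frontier positions, where the entry $R[j,k]$ records, after processing a prefix $\boldS_1\ldots \boldS_i$, whether the vertex of $\boldS_1\circ\ldots\circ\boldS_i$ that sources thread $j$ reaches the vertex that sources thread $k$ through the already-composed subgraph. The update of $R$ on reading a new slice $\boldS$ is deterministic: incoming threads that hit the center vertex of $\boldS$ are absorbed and replaced, the center vertex becomes the source of all outgoing threads of $\boldS$, and pass-through threads are merely relabeled by the out-frontier labels of $\boldS$. With this information at hand, I would flag an incoming edge from thread $j$ to the current center as \emph{transitively redundant} exactly when some other incoming thread $k$ satisfies $R[j,k]=1$, since then a path from $j$'s source via $k$'s source already witnesses that edge in the transitive closure. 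I would then define $\transitiveReduction(\automaton)$ so that, when reading a slice $\boldS^-$, it non-deterministically guesses the full slice $\boldS^+$ obtained by adding back precisely the center-incoming edges that $R$ marks as redundant, simulates $\automaton$ on $\boldS^+$, updates $\mathcal{R}$ on $\boldS^+$, and emits $\boldS^-$ as the input symbol. Vertical saturation of $\automaton$ guarantees that the guessed $\boldS^+$ extends to an accepting run along the same topological ordering.

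To conclude, I would verify that any accepted unit decomposition composes into a transitively reduced DAG (by the redundancy flag forcing short-circuit edges to be absent from every $\boldS^-$), that $\transitiveReduction(H)$ is represented along every topological ordering for every $H\in \lang_{\graph}(\automaton)$ (using vertical saturation plus the deterministic update of $R$), and that the resulting slice language is itself vertically saturated and transitively reduced as required by the definition of $\transitiveReduction(\lang)$. The main obstacle will be showing that the \emph{local} per-slice information in $R$ suffices to identify the \emph{global} transitive redundancy of an edge $(u,v)$: this works because any alternative path $u\to w_1\to \ldots \to w_\ell\to v$ in $H$ consists of vertices preceding $v$ in topological order, so the entire path has been processed by the time $v$ is consumed and $R$ faithfully records it. The state-count bound $2^{O(c\log c)}\cdot|\automaton|$ will follow by bounding the number of reachability configurations on $c$ ordered threads that are actually realizable through slice compositions, which is much finer than the naive $2^{O(c^2)}$ upper bound coming from treating $R$ as an arbitrary Boolean matrix.
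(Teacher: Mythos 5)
First, note that the paper does not actually prove Lemma \ref{lemma:TransitiveReductionSliceLanguages}: it is imported verbatim from \cite{deOliveiraOliveira2012}, so there is no in-paper argument to compare yours against. Judged on its own terms, your construction --- vertically saturate $\automaton$, run it in product with an automaton maintaining the reachability relation among the sources of the pending frontier edges, declare a center-incoming edge redundant when the source of another thread consumed at the same center is strictly reachable from its source, and nondeterministically re-insert exactly the redundant edges while emitting the reduced slice --- is the natural one, and its correctness core is sound: since every alternative path witnessing redundancy of an edge $(u,v)$ lies entirely in the prefix preceding $v$ in the topological ordering, local maintenance of the reachability information does suffice, and vertical saturation is exactly what lets the guessed full slices be completed to an accepting run of $\automaton$ along the same ordering.

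The genuine gap is the state bound, which is the only quantitative content of the lemma. A Boolean reachability matrix on up to $c$ threads has $2^{\Theta(c^2)}$ possible values, so the construction as you describe it yields $2^{O(c^2)}\cdot|\automaton|$ states. You assert that only $2^{O(c\log c)}$ of these configurations are realizable under the width-$c$ constraint, but you give no argument, and this is far from obvious: it amounts to the nontrivial combinatorial claim that the reachability relations induced on the sources of a width-$c$ cut, by DAGs all of whose prefix cuts have at most $c$ edges, form a family of size $2^{O(c\log c)}$ rather than $2^{\Theta(c^2)}$. Without a proof of that claim (or a replacement of the full matrix by a provably smaller invariant) you only obtain a weaker state bound than the one stated. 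Two smaller loose ends: you must carry along the partial injection matching the frontier positions of the emitted reduced slice with those of the guessed full slice (deleting edges shifts the frontier numbering), and you must actually verify --- not merely announce --- that the accepted language is vertically saturated and contains, for every $H\in\lang_{\graph}$ and every topological ordering $\omega$ realized in $\lang$, all $\omega$-compatible unit decompositions of $\transitiveReduction(H)$, as the paper's definition of $\transitiveReduction(\lang)$ requires. These last two points are routine; the counting claim is not.
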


Let $T$ be a finite set of transitions. We denote by $\partialorders(c,T)$ the set of all $c$-partial orders whose vertices are labeled with elements from $T$.

\begin{lemma}[\cite{deOliveiraOliveira2012}]
\label{lemma:InitialLanguage}
For any finite set $T$ and any $c\in \N$, one can construct  
a saturated transitively reduced slice automaton $\automaton(c,T)$ 
over $\directedslicealphabet(c,T)$ such that $\lang_{\partialOrder}(\automaton(c,T)) = \partialorders(c,T)$. 
\end{lemma}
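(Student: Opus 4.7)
The plan is to build a ``universal'' saturated slice automaton $\automaton_1$ over $\directedslicealphabet(c,T)$ that accepts every legal unit decomposition and has partial order language exactly $\partialorders(c,T)$, and then to invoke Lemma \ref{lemma:TransitiveReductionSliceLanguages} to replace it by its transitive reduction. Concretely, I would take $\automaton_1$ with state set $\{0,1,\ldots,c\}$ tracking the current frontier size, initial state $0$ (empty in-frontier), single accepting state $0$ (empty out-frontier), and a transition $(i,\boldS,j)$ for every unit slice $\boldS \in \directedslicealphabet(c,T)$ with in-frontier of size $i$ and out-frontier of size $j$. Since any two consecutive slices in a unit decomposition must have matching frontier sizes and share the canonical $\{1,\ldots,k\}$ labeling, we obtain $\lang(\automaton_1) = \lang(\directedslicealphabet(c,T))$, so $\automaton_1$ is saturated by construction.

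The next step is to check that $\lang_{\partialOrder}(\automaton_1) = \partialorders(c,T)$. For one direction, given a $c$-partial-order $\apartialorder$ with Hasse diagram $H$ covered by $c$ directed paths $\path_1,\ldots,\path_c$, I process the vertices of $H$ in a topological ordering refined from the chain orderings of the paths, yielding a unit decomposition in which each frontier carries at most one pending edge per path $\path_i$. This decomposition has width at most $c$, so $H \in \lang_{\graph}(\automaton_1)$ and $\apartialorder = \transitiveClosure(H) \in \lang_{\partialOrder}(\automaton_1)$. For the converse, any DAG $H$ admitting a unit decomposition of width $\leq c$ carries $c$ ``tracks'' traversing its frontiers; reading these off yields $c$ directed paths in $H$ whose union covers every vertex. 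Passing from $H$ to its Hasse diagram, each track-edge that is a shortcut of a longer chain in $\transitiveReduction(H)$ is replaced by the chain of Hasse edges it abbreviates, so the vertex cover descends to a $c$-path cover of $\transitiveReduction(H)$, whence $\apartialorder \in \partialorders(c,T)$.

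Finally, I let $\automaton(c,T) := \transitiveReduction(\automaton_1)$ via Lemma \ref{lemma:TransitiveReductionSliceLanguages}. By that lemma $\lang(\automaton(c,T))$ is transitively reduced; since transitive reduction is defined to preserve saturation (as noted right after the definition of $\transitiveReduction(\lang)$) and since $\transitiveClosure(\transitiveReduction(H)) = \transitiveClosure(H)$ for every DAG $H$, we obtain $\lang_{\partialOrder}(\automaton(c,T)) = \lang_{\partialOrder}(\automaton_1) = \partialorders(c,T)$, which finishes the construction.

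I expect the main obstacle to be the combinatorial correspondence in the second paragraph, linking slice-decomposition width to the path-cover number of the Hasse diagram. The delicate side is descending from a $c$-path cover of $H$ to one of $\transitiveReduction(H)$: here it is crucial that we are covering vertices rather than edges, so the paths extracted from the frontier tracks may be freely \emph{lengthened} through intermediate vertices of shortcut edges (with possible vertex sharing across paths) without destroying the cover property.
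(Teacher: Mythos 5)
There is a genuine gap in your second paragraph, and it propagates back to the saturation claim in your first. The universal automaton $\automaton_1$ with $\lang(\automaton_1)=\lang(\directedslicealphabet(c,T))$ is \emph{not} saturated, and $\lang_{\partialOrder}(\automaton_1)$ strictly contains $\partialorders(c,T)$ in general, because the converse direction of your width/path-cover correspondence fails: a DAG admitting \emph{some} unit decomposition of width at most $c$ need not be coverable by $c$ paths. Take $c=1$ and let $H$ consist of two disjoint edges $a\to x$ and $b\to y$. The unit decomposition compatible with the topological ordering $(a,x,b,y)$ has width $1$, so $H\in\lang_{\graph}(\automaton_1)$ and $\transitiveClosure(H)\in\lang_{\partialOrder}(\automaton_1)$; but $H$ is its own Hasse diagram and cannot be covered by one path, so $\transitiveClosure(H)\notin\partialorders(1,T)$. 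The same example refutes saturation: the decomposition compatible with $(a,b,x,y)$ has width $2$, hence $\unitdecompositions(H)\not\subseteq\lang(\automaton_1)$ even though $H\in\lang_{\graph}(\automaton_1)$ (saturation quantifies over \emph{all} unit decompositions of $H$, of any width, as the paper's remark after the definition makes explicit). The flaw in your ``track'' argument is that a track can die --- the frontier can shrink to the empty set in the middle of the decomposition --- and be reborn later at an unrelated vertex; what you read off is then a family of paths whose number is governed by the number of such rebirths, not by the width of the decomposition.

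The repair --- and the route by which the paper's own machinery recovers Lemma \ref{lemma:InitialLanguage}, which it otherwise imports from \cite{deOliveiraOliveira2012}, namely as the special case $\varphi\equiv\mathit{true}$ of Lemma \ref{lemma:SaturatedAutomatonPartialOrders} --- is to cut the universal automaton down by the $\msographs$ sentences $\gamma(c)$ ($c$-path coverability) and $\rho$ (transitive reducedness) via Lemma \ref{lemma:MSO-Regular}, setting $\automaton(c,T):=\automaton(c,T,\rho\wedge\gamma(c))$. Then every DAG in the graph language is coverable by $c$ paths, so by Proposition \ref{proposition:UnionOfKPathsSaturation} \emph{all} of its unit decompositions have width at most $c$ and satisfy the same defining $\msographs$ property; this is exactly what saturation requires and is the content of Lemma \ref{lemma:MSORegularPaths}. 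Restricting further to transitively reduced DAGs makes the graph language consist precisely of the Hasse diagrams of the orders in $\partialorders(c,T)$, giving $\lang_{\partialOrder}(\automaton(c,T))=\partialorders(c,T)$. Your inclusion $\partialorders(c,T)\subseteq\lang_{\partialOrder}(\automaton_1)$ is correct, but your final appeal to Lemma \ref{lemma:TransitiveReductionSliceLanguages} cannot compensate for the missing coverability constraint, since transitive reduction preserves the partial order language and hence preserves the spurious orders exhibited above.
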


\begin{definition}[$c$-Complementation]
\label{definition:cComplement}
Let $\partialorders \subseteq \partialorders(c,T)$. Then we let $\overline{\partialorders}^c=\partialorders(c,T)\backslash \partialorders$ 
be the $c$-complement of $\partialorders$. 
\end{definition}

The following lemma says that operations performed on transitively reduced saturated slice languages are reflected on the partial order languages
they represent. Below $\automaton\cup \automaton'$, $\automaton\cap \automaton'$ and $\automaton\backslash \automaton'$ denote automata whose {\em slice language} 
(i.e. the syntactic language) is equal to $\lang(\automaton)\cup \lang(\automaton')$, $\lang(\automaton)\cap \lang(\automaton')$ and 
$\lang(\automaton)\backslash \lang(\automaton')$ respectively. 

\begin{lemma}[Properties of Saturated Slice Languages \cite{deOliveiraOliveira2012}]
\label{lemma:PropertiesSaturatedSliceLanguages}
Let $\automaton$ and $\automaton'$ be two transitively-reduced slice automata over $\directedslicealphabet(c,\alphabettransitions)$. 
Assume that $\automaton$ is saturated. 
\begin{enumerate}
\itemsep0.2em
\item $\lang_{\partialOrder}(\automaton\cup \automaton') = \lang_{\partialOrder}(\automaton) \cup \lang_{\partialOrder}(\automaton')$
\item $\lang_{\partialOrder}(\automaton\cap \automaton') = \lang_{\partialOrder}(\automaton) \cap \lang_{\partialOrder}(\automaton')$
\item $\lang_{\partialOrder}(\automaton(c,T)\backslash \automaton) = \overline{\lang_{\partialOrder}(\automaton)}^c$. 
\item $\lang_{\partialOrder}(\automaton) \subseteq \lang_{\partialOrder}(\automaton')$ if and only if 
	$\lang(\automaton) \subseteq \lang(\automaton')$. 
\item $\lang_{\partialOrder}(\automaton) \cap \lang_{\partialOrder}(\automaton')=\emptyset$ if and only if 
	$\lang(\automaton) \cap \lang(\automaton')=\emptyset$. 
\item If $\automaton'$ is saturated then $\automaton\cup \automaton'$ and $\automaton \cap \automaton'$ are also saturated.
\end{enumerate}
\end{lemma}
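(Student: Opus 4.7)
\smallskip
\noindent\textbf{Proof plan.} The central observation I would exploit is that the two hypotheses \emph{transitively reduced} and \emph{saturated} together give an exact bridge between the syntactic slice language and the semantic partial order language. If $\lang$ is transitively reduced then every DAG $H \in \lang_{\graph}$ is already the Hasse diagram of $\transitiveClosure(H)$, so the map $H \mapsto \transitiveClosure(H)$ restricts to a bijection between $\lang_{\graph}$ and $\lang_{\partialOrder}$. If in addition $\lang$ is saturated then, for every $p \in \lang_{\partialOrder}$, the whole set $\unitdecompositions(H_p)$ of unit decompositions of its Hasse diagram $H_p$ is contained in $\lang$. Thus a saturated, transitively reduced slice language is uniquely determined by its partial order language via $\lang = \bigcup_{p \in \lang_{\partialOrder}} \unitdecompositions(H_p)$. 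I would derive every item of the lemma by repeatedly invoking this bridge.

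Items~1 and~6 are handled first, essentially by unwinding definitions. Item~1 requires no saturation at all: any $p \in \lang_{\partialOrder}(\automaton\cup \automaton')$ comes from a unit decomposition lying in one of the two syntactic languages, so $p$ lies in the corresponding partial order language, and the reverse inclusion is equally immediate. Item~6 uses saturation of both components: every Hasse diagram appearing in the graph language of $\automaton\cup \automaton'$ (respectively of $\automaton \cap \automaton'$) already appears in $\lang_{\graph}(\automaton) \cup \lang_{\graph}(\automaton')$ (resp.\ in $\lang_{\graph}(\automaton) \cap \lang_{\graph}(\automaton')$), and saturation then forces \emph{all} of its unit decompositions into the corresponding set, closing the union or intersection under $\unitdecompositions(\cdot)$.

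For items~2, 4, and~5 I would run the same core argument. The delicate direction of~2 is $\supseteq$: take $p \in \lang_{\partialOrder}(\automaton)\cap \lang_{\partialOrder}(\automaton')$, pick some $\boldU' \in \lang(\automaton')$ with $\transitiveClosure(\composedUprime)=p$; transitive reduction of both automata forces $\composedUprime$ to coincide with the unique Hasse diagram $H_p$ of $p$; saturation of $\automaton$ then places $\boldU'$ in $\lang(\automaton)$, producing the required witness in $\lang(\automaton)\cap \lang(\automaton')$. The forward direction of~5 is the contrapositive of the same argument. For the nontrivial direction of~4, given $\boldU \in \lang(\automaton)$ with partial order $p$, the same Hasse-diagram uniqueness coupled with the bridge above forces the specific $\boldU$ to belong to $\lang(\automaton')$. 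Item~3 follows by instantiating the item~2 / item~4 style reasoning against the universal automaton $\automaton(c,T)$ supplied by Lemma~\ref{lemma:InitialLanguage}, which is already saturated and transitively reduced and whose partial order language is $\partialorders(c,T)$; a $p$ belongs to $\overline{\lang_{\partialOrder}(\automaton)}^c$ iff $H_p$ is a Hasse diagram of $\lang_{\graph}(\automaton(c,T))$ but not of $\lang_{\graph}(\automaton)$, which by saturation of both sides is equivalent to $\unitdecompositions(H_p) \subseteq \lang(\automaton(c,T))\setminus \lang(\automaton)$. The main technical obstacle is item~4: reconciling the asymmetric saturation hypothesis with the need to transfer \emph{individual} unit decompositions, rather than merely whole Hasse diagrams, from $\lang(\automaton)$ into $\lang(\automaton')$; this is the place where one has to argue carefully that the transitive reduction of $\automaton'$, together with the bridge applied on the $\automaton$ side, already pins down enough slice-level witnesses on the $\automaton'$ side to close the inclusion.
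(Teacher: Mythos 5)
The paper never proves this lemma -- it imports it verbatim from \cite{deOliveiraOliveira2012} -- so your proposal has to stand on its own. Your central bridge is the right tool and is correctly set up: transitive reducedness makes $H\mapsto\transitiveClosure(H)$ a bijection from $\lang_{\graph}$ onto $\lang_{\partialOrder}$ (uniqueness of the Hasse diagram of a finite partial order), and saturation then gives $\lang=\bigcup_{p\in\lang_{\partialOrder}}\unitdecompositions(H_p)$. Your arguments for items 1, 2, 3, 5 and 6 all close: in each of them you only ever need to transport a \emph{single} witness unit decomposition from the $\automaton'$ side into $\lang(\automaton)$, and saturation of $\automaton$ alone accomplishes that.

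The gap is item 4, and it is not one you can argue your way around: the forward direction is false under the hypothesis as you (and the paper) state it. Take $T=\{a,b\}$ and let $H$ be the two-vertex, edgeless DAG with one vertex labelled $a$ and one labelled $b$; it has exactly two unit decompositions $\boldU_1,\boldU_2$, one per topological ordering. Put $\lang(\automaton)=\{\boldU_1,\boldU_2\}=\unitdecompositions(H)$ (finite, hence regular; saturated and transitively reduced) and $\lang(\automaton')=\{\boldU_1\}$ (transitively reduced but not saturated). Then $\lang_{\partialOrder}(\automaton)=\lang_{\partialOrder}(\automaton')=\{\transitiveClosure(H)\}$, yet $\lang(\automaton)\not\subseteq\lang(\automaton')$. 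What the forward direction of item 4 actually requires is saturation of the automaton on the \emph{superset} side, namely $\automaton'$: only then does the existence of one unit decomposition of $\composedU$ in $\lang(\automaton')$ force the particular $\boldU$ you started from into $\lang(\automaton')$. This reading is consistent with every use the paper makes of the lemma (Theorem \ref{theorem:RegularSliceLanguagesPetriNets}.\ref{theorem:Verification} places the saturation hypothesis on the right-hand automaton of each inclusion, and in Lemma \ref{lemma:ModelCheckingRegularSliceLanguages} both automata are saturated). So instead of deferring to a ``careful argument'' that pins down slice-level witnesses in $\lang(\automaton')$ -- no such argument exists -- you should state explicitly that item 4 needs $\automaton'$ saturated (or both), and prove it from there; the rest of your write-up is sound.
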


Lemma \ref{lemma:PropertiesSaturatedSliceLanguages} implies that union, intersection and $c$-complementation of partial order 
languages represented by transitively reduced saturated slice automata are computable, and inclusion and emptiness of 
intersection of these partial order languages are decidable. Theorem \ref{theorem:RegularSliceLanguagesPetriNets} establishes 
a close correspondence between the partial order behavior of bounded $p/t$-nets and regular slice languages. 
\newpage

\begin{theorem}[$p/t$-nets and Regular Slice Languages
\cite{deOliveiraOliveira2010,deOliveiraOliveira2012}]
\label{theorem:RegularSliceLanguagesPetriNets} 
\begin{enumerate}[i)] 
	\item {\bf Expressibility: }
		\label{theorem:Expressibility} Let $\petriNet=(P,T)$ be a $b$-bounded $p/t$-net and 
		$\mathit{sem}\in \{\mathit{ex},\mathit{cau}\}$. Then one can construct a saturated 
		transitively reduced slice automaton $\automaton_{\mathit{sem}}(N,c)$ over $\directedslicealphabet(c,T)$ such that 
		$\lang_{\partialOrder}(\automaton_{\mathit{sem}}(N,c))=\partialorders_{\mathit{sem}}(N,c)$.\\ 
	\item {\bf Verification: }\label{theorem:Verification} Let $\petriNet=(P,T)$
		be a $b$-bounded $p/t$-net, $\automaton$ be a slice automaton over $\directedslicealphabet(c,T)$, and 
		$\mathit{sem}\in \{\mathit{ex},\mathit{cau}\}$.
		\begin{enumerate}
			\item It is decidable whether $\partialorders_{sem}(N,c) \cap \lang_{\partialOrder}(\automaton) = \emptyset$,
			\item It is decidable whether $\lang_{\partialOrder}(\automaton) \subseteq \partialorders_{sem}(N,c)$,
			\item If $\automaton$ is saturated then it is decidable whether $\partialorders_{sem}(N,c)\subseteq \lang_{\partialOrder}(\automaton)$.\\
		\end{enumerate}
	\item {\bf Synthesis: }\label{theorem:Synthesis} Let
		$\automaton$ be a slice automaton, $c,b,r\in \N$ and $\mathit{sem}\in \{\mathit{ex},\mathit{cau}\}$. 
		Then one may automatically determine whether there exists a $(b,r)$-bounded $p/t$-net $\petriNet$ 
		which is $c$-$\mathit{sem}$-minimal for $\lang_{\partialOrder}(\automaton)$. In the case such a net exists, one may automatically construct it. 
\end{enumerate}
\end{theorem}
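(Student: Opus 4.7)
My plan decomposes the theorem into its three parts and treats them in order, with Expressibility doing the heavy lifting and Verification and Synthesis following as consequences of Lemma \ref{lemma:PropertiesSaturatedSliceLanguages} together with a finite enumeration.

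\textbf{Expressibility.} I would encode a process $\process$ of $\petriNet$ as a unit decomposition whose $i$-th unit slice corresponds to the $i$-th event fired along some topological ordering of $\process$. In the execution-semantics variant, the in-frontier of each slice carries ``dangling'' causal edges (edges from already-emitted events to events still in the future); the center vertex is labelled by the fired transition; and the out-frontier updates these dangling edges by closing those whose target has been emitted and opening new ones from the freshly emitted event. A state of the automaton records the current marking of $\petriNet$ together with, for each frontier position, which pending condition/token it carries. Because $\petriNet$ is $b$-bounded there are only finitely many markings, and because frontiers are capped at $c$ there are only finitely many annotations, giving a finite state space. Saturation is obtained by accepting every unit decomposition compatible with every topological ordering of each accepted process; transitive reduction is then applied via Lemma \ref{lemma:TransitiveReductionSliceLanguages}. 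For the causal variant, the states additionally certify non-redundancy so that $\composedU$ is a Hasse diagram rather than an arbitrary DAG representing the same partial order. Correctness $\lang_{\partialOrder}(\automaton_{\mathit{sem}}(N,c))=\partialorders_{\mathit{sem}}(N,c)$ reduces to an induction on the length of unit decompositions in one direction and on the number of events fired in the other.

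\textbf{Verification.} Given $\automaton_{\mathit{sem}}(\petriNet,c)$ from (i), the three items follow by direct appeal to Lemma \ref{lemma:PropertiesSaturatedSliceLanguages}. For (a) I form the syntactic intersection $\automaton \cap \automaton_{\mathit{sem}}(\petriNet,c)$ and decide emptiness of its slice language; item 5 of the lemma, applicable because $\automaton_{\mathit{sem}}(\petriNet,c)$ is saturated, lifts this to emptiness of $\partialorders_{\mathit{sem}}(\petriNet,c)\cap \lang_{\partialOrder}(\automaton)$. For (b) I use item 4: inclusion of partial-order languages coincides with inclusion of slice languages as soon as the right-hand side is saturated, which $\automaton_{\mathit{sem}}(\petriNet,c)$ is. For (c) the same item 4 applies, now because the hypothesis supplies saturatedness of $\automaton$ itself.

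\textbf{Synthesis.} Here the argument is a finite enumeration followed by verification. Up to renaming of places, there are only finitely many $(b,r)$-bounded $p/t$-nets over $T$, because place multiplicity is capped by $r$ and the functions $\putsp,\takesp$ take values in $\{0,\ldots,b\}$. I would enumerate these candidates $\petriNet$; for each one construct $\automaton_{\mathit{sem}}(\petriNet,c)$ via (i) and test the inclusion $\lang_{\partialOrder}(\automaton)\subseteq \partialorders_{\mathit{sem}}(\petriNet,c)$ via Verification (b). Among the qualifying nets, compare their partial-order behaviors pairwise using Verification (b) and (c) to extract one whose behavior is minimal under the subset order; this is the desired $c$-$\mathit{sem}$-minimal net, and if the list of qualifying nets is empty the enumeration itself witnesses non-existence.

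\textbf{Main obstacle.} The delicate step is (i) in the causal case. A $c$-partial-order bounds the path-cover width of the \emph{Hasse diagram}, yet the natural frontiers of a process of a $b$-bounded net with $n$ places can pile up to $b\cdot n$ pending conditions, so the slice automaton must separate the ``Hasse-diagram frontier'', which must have size at most $c$, from the underlying token bookkeeping. Moreover, when emitting a new Hasse edge the automaton must verify \emph{locally} that the edge is not already implied by transitivity of edges encoded in earlier slices. Designing the state space so that this non-redundancy check is simultaneously sound, complete, and finite is the technically demanding point, and is precisely what the constructions of the prior work cited in the theorem carry out.
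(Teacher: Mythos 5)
The first thing to note is that this paper does not prove Theorem \ref{theorem:RegularSliceLanguagesPetriNets} at all: it is imported verbatim from \cite{deOliveiraOliveira2010,deOliveiraOliveira2012} and used as a black box, so there is no in-paper proof for your attempt to be measured against. Judged on its own terms, your decomposition (expressibility first, then verification via Lemma \ref{lemma:PropertiesSaturatedSliceLanguages}, then synthesis by finite enumeration) is natural, and the second and third parts are essentially sound modulo two omissions. First, in part (ii) the given $\automaton$ is arbitrary, whereas Lemma \ref{lemma:PropertiesSaturatedSliceLanguages} requires \emph{both} automata to be transitively reduced; you must first pass to $\transitiveReduction(\automaton)$ via Lemma \ref{lemma:TransitiveReductionSliceLanguages} and observe that this preserves $\lang_{\partialOrder}$ (and preserves saturation, which item (c) needs). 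Second, your bound ``$\putsp,\takesp$ take values in $\{0,\dots,b\}$'' is not quite right: $b$-boundedness only forces $\takesp(t)\leq b$ for transitions that ever fire, and $\putsp(t)\leq b+\takesp(t)$, so the correct cap is on the order of $2b$, and one must argue that places exceeding it can be normalized away without changing the behavior. Both points are easily repaired, after which your brute-force synthesis --- more elementary than the region-style constructions of the cited literature --- does go through: the family of candidate nets is finite up to behavior, qualification and pairwise comparison of behaviors are decidable by part (ii), and a $\subseteq$-minimal element of a finite nonempty family exists.

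The genuine gap is in part (i), which is where all the content of the theorem lives. Your sketch of the execution-semantics automaton (marking plus frontier bookkeeping in the state) is the right idea, but for the causal semantics you only name the difficulty --- keeping the width-$\leq c$ Hasse-diagram frontier separate from the up-to-$b\cdot|P|$ pending conditions, and certifying locally that no emitted edge is implied by transitivity --- without resolving it. Since the correctness claim $\lang_{\partialOrder}(\automaton_{\mathit{sem}}(N,c))=\partialorders_{\mathit{sem}}(N,c)$, together with saturation and transitive reducedness, is exactly what parts (ii) and (iii) and every downstream theorem of the paper lean on, a proof that defers this point is a proof relative to the cited prior work rather than a self-contained one. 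That is a defensible position --- it is the one the paper itself takes --- but you should say explicitly that this is what you are doing.
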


Observe that the synthesis result stated in Theorem \ref{theorem:RegularSliceLanguagesPetriNets}.\ref{theorem:Synthesis}
can be understood as the inverse of the expressibility result stated in Theorem \ref{theorem:RegularSliceLanguagesPetriNets}.\ref{theorem:Expressibility}. 
On the one hand, Theorem \ref{theorem:RegularSliceLanguagesPetriNets}.\ref{theorem:Expressibility} can be used to construct a 
slice automaton $\automaton$ representing the $c$-partial order behavior of $N = (P,T)$. On the other hand, Theorem \ref{theorem:RegularSliceLanguagesPetriNets}.\ref{theorem:Synthesis} can 
be used to recover from $\automaton$ a $p/t$-net $N'$ whose $c$-partial order behavior is equal to the $c$-partial-order behavior of $N$.

\vspace{-10pt}
\section{Monadic Second Order Logic of Graphs}
\label{section:MSO}

The monadic second order logic of graphs $\msopartialorders$ extends first order logic by allowing 
quantification over sets of vertices. The logic $\msographs$ is an extension 
of $\msopartialorders$ that also allows quantification over sets of edges. We refer to \cite{CourcelleEngelfriet2012}
for an extensive treatment of these logics. 
In this section we 
will use $\msographs$ to describe properties of DAGs, while we will use $\msopartialorders$ to 
describe properties of partial orders.

We will represent a partial order $\apartialorder$  by a relational structure $\apartialorder=(V,<,l)$
where $V$ is a set of vertices, $<\subset V\times V$ is an ordering relation and 
$l\subseteq V\times \alphabettransitions$  is a vertex labeling relation where $T$ is a finite set of symbols (which should be regarded as the 
labels of transitions in a concurrent system). First order variables representing individual vertices will be taken from the set $\{x_1,x_2,...\}$ while 
second order variables representing sets of vertices will be taken from the set $\{X_1,X_2,...\}$. 
The set of $\msopartialorders$ formulas is the smallest set of formulas containing: 
\begin{itemize}
	\item the atomic formulas $x_i\in X$, $x_i < x_j$, $l(x_i,a)$ for each $i,j\in \N$ with $i\neq j$ and each $a\in \alphabettransitions$, 
	\item the formulas $\varphi \wedge \psi$, $\varphi \vee \psi$, $\neg \varphi$, $\exists x_i.\varphi(x_i)$ and $\exists X_i.\varphi(X_i)$,
	  	where $\varphi$ and $\psi$ are $\msopartialorders$ formulas. 
\end{itemize}

An $\msopartialorders$ sentence is a $\msopartialorders$ formula $\varphi$ without free variables. If $\varphi$ is a sentence, and 
$\apartialorder=(V,<,l)$ a partial order, then we denote by $\apartialorder \models \varphi$ the fact that $\apartialorder$ satisfies $\varphi$. 

We will represent a general DAG $G$ by a relational structure $G=(V,E,s,t,l)$
where $V$ is a set of vertices, $E$ a set of edges, $s,t \subseteq E\times V$ are respectively the source 
and target relations, $l\subseteq V\times \alphabettransitions$  is a vertex labeling relation, where $\alphabettransitions$ is a
finite set of symbols. 
If $e$ is an edge in $E$ and $v$ is a vertex in $V$ then $s(e,v)$ is true if $v$ is the source of $e$ and 
$t(e,v)$ is true if $v$ is the target of $e$. If $v\in V$ and $a\in \alphabettransitions$ then $l(v,a)$ is true if $v$ is labeled with $a$.
First order variables representing individual vertices will be taken from the set $\{x_1,x_2,...\}$ and first order variables 
representing edges, from the set $\{y_1,y_2,...\}$. Second order variables representing sets of vertices will 
be taken from the set $\{X_1,X_2,...\}$ and second order variables representing sets of edges, from the 
set $\{Y_1,Y_2,...\}$. 
The set of $MSO_2$ formulas is the smallest set of formulas containing: 
\begin{itemize}
	\item the atomic formulas $x_i\in X_j$, $y_i\in Y_j$, $s(y_i,x_j)$, $t(y_i,x_j)$, $l(x_i,a)$ for each $i,j\in \N$ and $a\in \alphabettransitions$, 
	\item the formulas $\varphi \wedge \psi$, $\varphi \vee \psi$, $\neg \varphi$, $\exists x_i.\varphi(x_i)$ and $\exists X_i.\varphi(X_i)$,
	  $\exists y_i.\varphi(Y_i)$ and $\exists Y_i.\varphi(Y_i)$, where $\varphi$ and $\psi$ are $\msographs$ formulas. 
\end{itemize}

An $\msographs$ sentence is a formula $\varphi$ without free variables. If $\varphi$ is a sentence, then 
we denote by $G\models \varphi$ the fact that $G$ satisfies $\varphi$.

\section{MSO Logic and Slice Languages}

Lemma \ref{lemma:MSO-Regular} below, which was proved in a more general context \cite{deOliveiraOliveira2013}, 
states that the set of all unit decompositions $\boldU$ whose graph $\composedU$ satisfy a given $\msographs$ 
formula $\varphi$ is a regular slice language.

\begin{lemma}[\cite{deOliveiraOliveira2013}]
\label{lemma:MSO-Regular}
Given a $\msographs$ formula $\varphi$, one can effectively construct a slice automaton $\automaton(c,\alphabettransitions,\varphi)$ 
over $\directedslicealphabet(c,\alphabettransitions)$ such that 
$$\lang(\automaton(c, \alphabettransitions,\varphi)) = \{\boldU\in \lang(\directedslicealphabet(c,\alphabettransitions))\; | \; \composedU \models \varphi\}.$$
\end{lemma}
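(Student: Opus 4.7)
The plan is to give a Courcelle-Büchi style construction. Since a unit decomposition $\boldU=\boldS_1\boldS_2\ldots\boldS_n$ of width at most $c$ provides a bounded-interface presentation of $\composedU$, any $\msographs$ property of $\composedU$ can be tracked by a finite automaton that reads $\boldU$ from left to right. The state after the prefix $\boldS_1\boldS_2\ldots\boldS_i$ will encode the $\msographs$-theory of the partial DAG $\boldS_1\circ\ldots\circ\boldS_i$, relativized to its out-frontier; the acceptance condition then checks, once the out-frontier has become empty, that this theory is consistent with $\varphi$.

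More precisely, let $q$ be the quantifier rank of $\varphi$. For a DAG $G$ equipped with a distinguished tuple $\bar v=(v_1,\ldots,v_k)$ of vertices, define its $q$-type $\tau_q(G,\bar v)$ to be the set of all $\msographs$ formulas of quantifier rank at most $q$ with $k$ free vertex variables that are satisfied by $(G,\bar v)$. Up to logical equivalence there are only finitely many such formulas, so as $k$ ranges over $\{0,1,\ldots,c\}$ the collection of possible $q$-types arising along a unit decomposition of width at most $c$ is finite; these types will serve as the state set of $\automaton(c,\alphabettransitions,\varphi)$. The initial state is the $q$-type of the empty graph, and a state is final whenever its $0$-variable $q$-type logically implies $\varphi$.

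The transition function is defined via a Feferman-Vaught style composition rule: given the $q$-type of $\boldS_1\circ\ldots\circ\boldS_i$ relative to its out-frontier $O_i$, and the $q$-type of the next slice $\boldS_{i+1}$ relative to its own in- and out-frontiers (itself effectively computable because $\boldS_{i+1}$ has bounded size), the $q$-type of $\boldS_1\circ\ldots\circ\boldS_{i+1}$ relative to its new out-frontier $O_{i+1}$ is uniquely determined. This update rule becomes the transition function, after which one intersects the automaton with the syntactic constraints of Definition \ref{definition:SliceAutomaton} (initial first slice, pairwise composability, final last slice). Correctness of the identity $\lang(\automaton(c,\alphabettransitions,\varphi))=\{\boldU \in \lang(\directedslicealphabet(c,\alphabettransitions)) \mid \composedU \models \varphi\}$ then follows by induction on the length of $\boldU$.

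The main obstacle is establishing the composition lemma at the level of $\msographs$-types. Classical Feferman-Vaught handles disjoint unions of relational structures, whereas slice composition identifies each pair of frontier vertices carrying the same integer label in $\{1,\ldots,|O_i|\}$ and then fuses their two incident edges, an operation that is both vertex- and edge-level. The way I would handle this is to simulate slice composition by taking the disjoint union of the two labeled DAGs and then applying a fixed $\msographs$-definable transduction that quotients each matched pair of frontier vertices and merges the associated pair of edges; since at most $c$ identifications are performed, this transduction has constant size parameters and therefore pulls back $\msographs$-formulas of rank $q$ to $\msographs$-formulas of rank $q+O(1)$, which is enough to certify that the $q$-type of the glued graph is determined by the $q$-types of the two pieces together with the matching of their frontiers. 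Once this composition lemma is in place, finiteness of the state set follows from finiteness of $q$-types and effectivity follows from the decidability of $\msographs$ on structures of bounded size.
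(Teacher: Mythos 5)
The paper does not prove this lemma at all: it is imported verbatim from \cite{deOliveiraOliveira2013}, so there is no in-paper argument to compare against. Your architecture --- a type automaton whose state after the prefix $\boldS_1\ldots\boldS_i$ is the rank-$q$ $\msographs$-theory of the composed prefix relativized to its bounded interface, with acceptance when the empty-frontier type contains $\varphi$ --- is the standard Courcelle--B\"uchi route and is surely in the spirit of the cited reference. Two details are glossed over but fixable: for $\msographs$ the objects crossing the cut after gluing are the (fused) edges, not the frontier vertices themselves, which disappear; so the type should be relativized to the dangling edges (or to their persistent source vertices together with their labels), not to a tuple of frontier vertices.

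The genuine gap is in your justification of the composition lemma. Simulating one gluing step as a disjoint union followed by a fixed $\msographs$-transduction, and then pulling formulas back, shows only that the rank-$q$ type of $\boldS_1\circ\cdots\circ\boldS_{i+1}$ is determined by the rank-$(q+d)$ types of the two pieces, for some constant $d>0$ contributed by the backwards translation. That statement does not close the induction: iterating it over a decomposition of length $n$ would require the initial state to carry a rank-$(q+nd)$ type, so the state set is no longer finite and the construction collapses. What you actually need --- and what is true --- is that the rank-$q$ type of the glued structure, \emph{with the interface elements named as distinguished constants in both pieces}, is determined by the rank-$q$ types of the pieces at the \emph{same} rank. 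This is proved directly by an Ehrenfeucht--Fra\"iss\'e argument (Duplicator plays the two component games in parallel, the shared interface letting her keep the partial isomorphisms consistent), i.e., by the bounded-interface generalization of Feferman--Vaught rather than by transduction pullback. With that lemma substituted, your induction on the length of $\boldU$ and the finiteness and effectivity claims go through.
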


We say that a DAG $H=(V,E)$ can be covered by $c$ paths if there exist simple paths $\path_1,...,\path_c$ in $H$ 
with $\path_i=(V_i,E_i)$ such that $V=\cup_i V_i$ and $E=\cup_i E_i$. Proposition \ref{proposition:UnionOfKPathsSaturation}
below establishes a correspondence between $c$-coverable DAGs and their sets of unit decompositions. 

\begin{proposition}
\label{proposition:UnionOfKPathsSaturation}
Let $H$ be a DAG. If $H$ can be covered by $c$ paths, then any unit decomposition of $H$ has width at most $c$. 
\end{proposition}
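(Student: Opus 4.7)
\bigskip

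\noindent\textbf{Proof plan.}
The plan is to reduce the statement to a combinatorial fact about cuts of a DAG under a topological ordering. Let $\boldU=\boldS_1\boldS_2\ldots\boldS_n$ be a unit decomposition of $H$, and let $v_i$ be the center vertex of $\boldS_i$. As noted in the paragraph preceding the statement, the sequence $\omega=(v_1,\ldots,v_n)$ is automatically a topological ordering of $H$. First I would formalise the observation that the composition operation $\circ$ glues two unit slices precisely by identifying the frontier vertices that label the ``dangling halves'' of a common edge; hence, for each $i\in\{1,\ldots,n-1\}$, there is a natural bijection between the vertices of the common frontier $O_i=I_{i+1}$ between $\boldS_i$ and $\boldS_{i+1}$ and the edges of $H$ whose source lies in $\{v_1,\ldots,v_i\}$ and whose target lies in $\{v_{i+1},\ldots,v_n\}$. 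Call such edges the \emph{crossing edges at cut $i$}, and denote their set by $C_i$. Since $\boldS_1$ is initial and $\boldS_n$ final, $|I_1|=|O_n|=0$, so $\width(\boldU)=\max_{1\le i\le n-1}|C_i|$.

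Next I would bound $|C_i|$ using the path cover. By hypothesis there exist simple directed paths $\path_1,\ldots,\path_c$ in $H$ with $V(H)=\bigcup_j V(\path_j)$ and $E(H)=\bigcup_j E(\path_j)$. Every edge $e\in C_i$ thus lies in at least one $\path_j$, so it suffices to show that each $\path_j$ contributes at most one edge to $C_i$. This is where the topological ordering does the work: since $\path_j$ is a directed path in the DAG $H$ and $\omega$ is a topological ordering, the indices of the consecutive vertices along $\path_j$ form a strictly increasing sequence $i_{j,1}<i_{j,2}<\cdots<i_{j,k_j}$. Consequently there is a unique largest index $m$ with $i_{j,m}\le i$, and the only edge of $\path_j$ possibly crossing cut $i$ is the edge $(v_{i_{j,m}},v_{i_{j,m+1}})$ (and only if $m<k_j$). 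Summing over $j$ gives $|C_i|\le c$.

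Combining the two observations, $\width(\boldU)=\max_i |C_i|\le c$, which is the claim. The main conceptual step is the cut--frontier correspondence; the rest follows from the elementary property that a directed simple path in a DAG crosses any topological cut at most once. Since the argument is valid for an arbitrary unit decomposition of $H$, the statement holds as required.
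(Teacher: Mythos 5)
Your proof is correct and follows essentially the same route as the paper's: both pass to the topological ordering induced by the center vertices, identify the width of the decomposition with the sizes of the cuts of that ordering, and bound each cut by observing that a directed path crosses any topological cut at most once. Your write-up is somewhat more careful about the frontier--cut correspondence, but the underlying argument is identical.
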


We let $\gamma(c)$ be the $\msographs$ sentence
which is true on a DAG $H$ whenever $H$ can be covered by $c$ paths. 
Then we have that $\lang(\automaton(c,\alphabettransitions,\varphi \wedge \gamma(c)))$ is the set of all unit 
decompositions in $\lang(\automaton(c,\alphabettransitions,\varphi))$ whose corresponding DAG can be covered by $c$-paths.

\begin{lemma}
\label{lemma:MSORegularPaths}
For any $\msographs$ formula $\varphi$ and any positive integer $c\in \N$, the slice automaton $\automaton(c,\alphabettransitions,\varphi\wedge\gamma(c))$ is 
saturated. 
\end{lemma}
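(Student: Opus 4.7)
The plan is to unfold the definition of saturation and verify it by chaining Lemma \ref{lemma:MSO-Regular} with Proposition \ref{proposition:UnionOfKPathsSaturation}. Recall that a slice language $\lang\subseteq \lang(\directedslicealphabet(c,T))$ is saturated when, for every DAG $H\in \lang_{\graph}$, the full set $\unitdecompositions(H)=\bigcup_{c'\geq 0}\unitdecompositions(H,c')$ is contained in $\lang$. So I would fix an arbitrary DAG $H\in \lang_{\graph}(\automaton(c,\alphabettransitions,\varphi\wedge\gamma(c)))$ and an arbitrary $\boldU'\in \unitdecompositions(H)$, and argue that $\boldU'\in \lang(\automaton(c,\alphabettransitions,\varphi\wedge\gamma(c)))$.

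First I would extract information about $H$. By Lemma \ref{lemma:MSO-Regular}, some witnessing unit decomposition $\boldU\in \lang(\directedslicealphabet(c,\alphabettransitions))$ composes to $H$ and satisfies $\composedU\models \varphi\wedge\gamma(c)$; in particular $H\models \gamma(c)$, i.e.\ $H$ admits a cover by $c$ simple paths. Now I would invoke Proposition \ref{proposition:UnionOfKPathsSaturation}: being $c$-coverable forces every unit decomposition of $H$ to have width at most $c$. Consequently, the arbitrary $\boldU'\in \unitdecompositions(H)$ in fact lies in $\unitdecompositions(H,c)\subseteq \lang(\directedslicealphabet(c,\alphabettransitions))$, so the restricted slice alphabet is not an obstruction.

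Finally, since $\composedUprime=H$ still satisfies $\varphi\wedge\gamma(c)$ (the formula depends only on $H$, not on the chosen decomposition), Lemma \ref{lemma:MSO-Regular} applied in the other direction gives $\boldU'\in \lang(\automaton(c,\alphabettransitions,\varphi\wedge\gamma(c)))$, establishing saturation. The only delicate point, and the reason $\gamma(c)$ is conjoined to $\varphi$, is the interaction between the width bound baked into $\directedslicealphabet(c,\alphabettransitions)$ and the existence of potentially wider unit decompositions of $H$: without the $\gamma(c)$ conjunct, $H$ could admit decompositions of width strictly greater than $c$ (which the alphabet would silently discard, breaking saturation). Proposition \ref{proposition:UnionOfKPathsSaturation} is exactly the tool that closes this gap, and I expect the argument to require no further machinery.
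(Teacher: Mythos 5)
Your proposal is correct and follows essentially the same route as the paper's own proof: both use Lemma \ref{lemma:MSO-Regular} to characterize membership in $\lang(\automaton(c,\alphabettransitions,\varphi\wedge\gamma(c)))$ by properties of the composed DAG alone, and both invoke Proposition \ref{proposition:UnionOfKPathsSaturation} to ensure that every unit decomposition of a $c$-coverable DAG stays within the width-$c$ alphabet. Your closing remark about why the $\gamma(c)$ conjunct is indispensable is a nice explicit articulation of the point the paper's argument relies on implicitly.
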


Recall that if $H$ is a DAG, then $\transitiveReduction(H)$ denotes the transitive reduction of $H$. 

\begin{proposition}[Partial Orders vs Hasse Diagrams]
\label{proposition:PartialOrdersVsHasseDiagrams}
For any $\msopartialorders$ formula $\varphi$ expressing a partial order property, there is 
an $\msographs$ formula $\varphi^{\mathit{gr}}$ expressing a property of DAGs such that for 
any partial order $\apartialorder\in \partialorders(c,T)$, $\apartialorder\models\varphi$ 
if and only if $\transitiveReduction(\apartialorder) \models \varphi^{\mathit{gr}}$. 
\end{proposition}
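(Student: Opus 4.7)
The plan is to obtain $\varphi^{\mathit{gr}}$ from $\varphi$ by syntactically rewriting each atomic formula of the shape $x_i < x_j$ into an $\msographs$ formula that expresses reachability along a directed path of length at least one in the underlying DAG. All other constructs of $\varphi$, namely the atomic formulas $x_i \in X_j$ and $l(x_i,a)$, the Boolean connectives, and the quantifiers over vertex variables and vertex set variables, are kept verbatim. This is well-defined because both the partial-order signature $(V,<,l)$ and the DAG signature $(V,E,s,t,l)$ share the same vertex sort and the same vertex labeling relation $l$, so assignments of first-order and monadic second-order variables on $\apartialorder$ transfer unchanged to $\transitiveReduction(\apartialorder)$.

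First I would introduce the auxiliary $\msographs$ formulas $\mathit{edge}(x,y) := \exists y_1.\,(s(y_1,x)\wedge t(y_1,y))$ and
\[
\mathit{reach}(x,y) := \forall X.\,\bigl[\bigl(\forall z.\,\mathit{edge}(x,z)\rightarrow z\in X\bigr) \wedge \bigl(\forall u,v.\,(u\in X\wedge \mathit{edge}(u,v))\rightarrow v\in X\bigr)\bigr]\rightarrow y\in X.
\]
A routine argument shows that on any DAG $H$, $\mathit{reach}(x,y)$ holds exactly when there exists a directed path in $H$ from $x$ to $y$ of length at least one. With this at hand, the translation $\varphi\mapsto \varphi^{\mathit{gr}}$ is defined by structural induction: atomic formulas $x_i\in X_j$ and $l(x_i,a)$ are unchanged, $(x_i<x_j)^{\mathit{gr}}:=\mathit{reach}(x_i,x_j)$, and the connectives and quantifiers distribute through the translation. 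Since the original formula contains no edge or edge-set variables, the translated formula is in $\msographs$.

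For correctness I would proceed by induction on the structure of $\varphi$. The only nontrivial case is the atomic one $x_i < x_j$, and the needed equivalence reduces to the defining property of $\transitiveReduction$: by construction of the Hasse diagram of a partial order, $x < y$ in $\apartialorder$ if and only if there is a directed path of length at least one from $x$ to $y$ in $\transitiveReduction(\apartialorder)$, which is exactly what $\mathit{reach}(x,y)$ captures by the claim above. The inductive steps for Boolean connectives and quantifiers are immediate because the carrier set $V$ and the labeling relation $l$ are identical in $\apartialorder$ and $\transitiveReduction(\apartialorder)$, so ranging quantifiers and set-membership atoms behave identically in the two structures. The main obstacle, though a mild one, is setting up the $\mathit{reach}$ formula so that it captures precisely paths of length at least one (rather than the reflexive closure), which matters because $<$ is the strict order; the formulation above handles this by requiring the initial step out of $x$ via $\mathit{edge}$.
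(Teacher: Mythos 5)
Your proposal is correct and follows essentially the same route as the paper: both define $\varphi^{\mathit{gr}}$ by syntactically replacing each atom $x_i < x_j$ with an $\msographs$ formula asserting the existence of a directed path of positive length from $x_i$ to $x_j$, and both rest on the fact that $v<v'$ in $\apartialorder$ iff such a path exists in $\transitiveReduction(\apartialorder)$. The only (immaterial) difference is that the paper witnesses the path existentially via vertex- and edge-set variables, whereas you express reachability by the universal least-closed-set formulation.
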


Let $c\in \N$, $T$ be a finite set, and $\varphi$ be $\msopartialorders$ formula. 
We denote by $\partialorders(c,T,\varphi)$ the set of all $c$-partial orders satisfying $\varphi$ whose
vertices are labeled with elements from $T$. We denote by $\rho$ be the $\msographs$ formula which is 
true on a DAG $H$ whenever $H$ is transitively reduced, i.e., whenever $H=\transitiveReduction(H)$.

\begin{lemma}
\label{lemma:SaturatedAutomatonPartialOrders}
Let $\varphi$ be a $\msopartialorders$ formula expressing a partial order property, and $\varphi^{\mathit{gr}}$ 
be the $\msographs$ formula of Proposition \ref{proposition:PartialOrdersVsHasseDiagrams}. Then 
$\automaton(c,T,\varphi^{\mathit{gr}}\wedge \rho \wedge \gamma(c))$ is a saturated transitively reduced 
slice automaton and $\partialorders(c,T,\varphi) = \lang_{\partialOrder}(\automaton(c,T,\varphi^{\mathit{gr}}\wedge \rho \wedge \gamma(c))) $. 
\end{lemma}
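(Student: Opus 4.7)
The plan is to assemble the lemma by analyzing the role played by each of the three conjuncts in $\varphi^{\mathit{gr}} \wedge \rho \wedge \gamma(c)$. The argument should not require new machinery, only a careful combination of Lemma \ref{lemma:MSO-Regular}, Lemma \ref{lemma:MSORegularPaths}, Proposition \ref{proposition:UnionOfKPathsSaturation}, and Proposition \ref{proposition:PartialOrdersVsHasseDiagrams}. Saturation should follow immediately from Lemma \ref{lemma:MSORegularPaths}: for any $\msographs$ formula $\psi$ the automaton $\automaton(c,T,\psi \wedge \gamma(c))$ is saturated, so instantiating $\psi := \varphi^{\mathit{gr}} \wedge \rho$ yields the claim. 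Transitive reducedness is forced by the conjunct $\rho$: by Lemma \ref{lemma:MSO-Regular}, every accepted unit decomposition $\boldU$ satisfies $\composedU \models \rho$, i.e., $\composedU = \transitiveReduction(\composedU)$, so every DAG in the graph language is a Hasse diagram.

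The core of the proof is the language equality $\partialorders(c,T,\varphi) = \lang_{\partialOrder}(\automaton(c,T,\varphi^{\mathit{gr}} \wedge \rho \wedge \gamma(c)))$, which I would handle by two inclusions. For the forward direction, I take $\apartialorder \in \partialorders(c,T,\varphi)$ and let $H := \transitiveReduction(\apartialorder)$. By the definition of a $c$-partial order the Hasse diagram $H$ is coverable by $c$ paths, so $H \models \gamma(c)$; by construction $H \models \rho$; and Proposition \ref{proposition:PartialOrdersVsHasseDiagrams} applied to $\apartialorder \models \varphi$ gives $H \models \varphi^{\mathit{gr}}$. By Proposition \ref{proposition:UnionOfKPathsSaturation}, every unit decomposition of $H$ has width at most $c$, so picking any such $\boldU$ (obtained from any topological ordering of $H$) places it in $\lang(\automaton(c,T,\varphi^{\mathit{gr}} \wedge \rho \wedge \gamma(c)))$ via Lemma \ref{lemma:MSO-Regular}, and $\transitiveClosure(\composedU) = \transitiveClosure(H) = \apartialorder$ yields the desired membership.

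For the reverse inclusion, I take $\apartialorder = \transitiveClosure(\composedU)$ for an accepted $\boldU$. The conjunct $\rho$ makes $\composedU$ transitively reduced, so $\composedU$ is in fact the Hasse diagram of $\apartialorder$; the conjunct $\gamma(c)$ then makes this Hasse diagram $c$-coverable, so $\apartialorder$ is a $c$-partial order; and the conjunct $\varphi^{\mathit{gr}}$, together with Proposition \ref{proposition:PartialOrdersVsHasseDiagrams}, forces $\apartialorder \models \varphi$. The labels come from $T$ because the underlying slice alphabet is $\directedslicealphabet(c,T)$. I do not anticipate any substantial obstacle: the only genuinely non-trivial step is guaranteeing the existence of a width-at-most-$c$ unit decomposition of $H$ in the forward direction, which is precisely the content of Proposition \ref{proposition:UnionOfKPathsSaturation}; everything else amounts to bookkeeping on the semantics of the three conjuncts.
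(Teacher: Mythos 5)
Your proposal is correct and follows essentially the same route as the paper's own proof: saturation via Lemma \ref{lemma:MSORegularPaths}, transitive reducedness via the conjunct $\rho$, and the language equality via the semantics of the three conjuncts together with Proposition \ref{proposition:PartialOrdersVsHasseDiagrams}. The only difference is presentational --- you spell out the two inclusions explicitly (and correctly flag Proposition \ref{proposition:UnionOfKPathsSaturation} as the step guaranteeing a width-$c$ unit decomposition in the forward direction), whereas the paper compresses this into a single ``if and only if'' characterization of the DAGs satisfying $\varphi^{\mathit{gr}}\wedge\rho\wedge\gamma(c)$.
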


\begin{lemma}[Verifying Regular Slice Languages]
\label{lemma:ModelCheckingRegularSliceLanguages}
Let $\varphi$ be a $\msopartialorders$ formula, and let $\automaton$ be a transitively reduced saturated slice automaton over 
$\directedslicealphabet(c,\alphabettransitions)$. 
\begin{enumerate}[i)]
\item One may effectively verify whether $\lang_{po}(\automaton)\cap \partialorders(c,T,\varphi) = \emptyset$.
\item One may effectively verify whether $\lang_{po}(\automaton) \subseteq \partialorders(c,T,\varphi)$.
\item One may effectively verify whether $\partialorders(c,T,\varphi) \subseteq \lang_{po}(\automaton)$.
\end{enumerate} 
\end{lemma}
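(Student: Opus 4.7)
The plan is to reduce all three decision questions to the corresponding questions at the level of the (syntactic) slice languages, where one is dealing with ordinary finite automata and can apply classical decidability results. The bridge between the two levels is provided by Lemma \ref{lemma:PropertiesSaturatedSliceLanguages}, which requires that at least one of the two automata involved be transitively reduced and saturated. Since $\automaton$ is already assumed to be transitively reduced and saturated, the only thing missing is a companion slice automaton representing $\partialorders(c,T,\varphi)$ with the same structural properties.

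First, I would invoke Lemma \ref{lemma:SaturatedAutomatonPartialOrders} to construct, from the $\msopartialorders$ formula $\varphi$, a transitively reduced saturated slice automaton $\automaton_{\varphi}:=\automaton(c,T,\varphi^{\mathit{gr}}\wedge \rho \wedge \gamma(c))$ over $\directedslicealphabet(c,T)$ such that $\lang_{\partialOrder}(\automaton_\varphi)=\partialorders(c,T,\varphi)$. This is the only step where we actually use the MSO-to-automaton machinery (Lemma \ref{lemma:MSO-Regular}), the partial-order-vs-Hasse-diagram translation (Proposition \ref{proposition:PartialOrdersVsHasseDiagrams}), and the saturation guarantee provided by the $c$-path cover formula $\gamma(c)$ (Lemma \ref{lemma:MSORegularPaths}).

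With both $\automaton$ and $\automaton_\varphi$ in hand, I would handle the three items uniformly. For (i), by item~5 of Lemma \ref{lemma:PropertiesSaturatedSliceLanguages}, saturation of $\automaton$ (or $\automaton_\varphi$) yields
\[
\lang_{\partialOrder}(\automaton)\cap \partialorders(c,T,\varphi)=\emptyset \iff \lang(\automaton)\cap \lang(\automaton_\varphi)=\emptyset,
\]
and emptiness of intersection of the regular slice languages $\lang(\automaton)$ and $\lang(\automaton_\varphi)$ is decidable by constructing a product finite automaton and testing emptiness. For (ii), since $\automaton$ is saturated, item~4 of Lemma \ref{lemma:PropertiesSaturatedSliceLanguages} gives
\[
\lang_{\partialOrder}(\automaton)\subseteq \partialorders(c,T,\varphi) \iff \lang(\automaton)\subseteq \lang(\automaton_\varphi),
\]
which is decidable by the standard complement-and-intersect test for regular languages. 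For (iii), I apply item~4 in the opposite direction: because $\automaton_\varphi$ is saturated,
\[
\partialorders(c,T,\varphi)\subseteq \lang_{\partialOrder}(\automaton) \iff \lang(\automaton_\varphi)\subseteq \lang(\automaton),
\]
and this is again a routine regular-language inclusion test.

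There is no genuine obstacle once Lemma \ref{lemma:SaturatedAutomatonPartialOrders} is available; the delicate point that deserves care is making sure that each inclusion in Lemma \ref{lemma:PropertiesSaturatedSliceLanguages}(4) is applied in the direction in which the \emph{left-hand} automaton is saturated, which is exactly why item (ii) relies on the saturation hypothesis on $\automaton$ while item (iii) relies on the saturation of $\automaton_\varphi$ granted by Lemma \ref{lemma:SaturatedAutomatonPartialOrders}. The construction of $\automaton_\varphi$ itself is effective, and all classical automata operations invoked are computable, so all three decision problems are effectively decidable.
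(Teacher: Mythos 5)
Your proposal is correct and follows essentially the same route as the paper: construct the saturated, transitively reduced automaton $\automaton(c,T,\varphi^{\mathit{gr}}\wedge\rho\wedge\gamma(c))$ via Lemma \ref{lemma:SaturatedAutomatonPartialOrders}, then reduce all three questions to decidable questions about regular slice languages via Lemma \ref{lemma:PropertiesSaturatedSliceLanguages}. Your explicit care about which side of each inclusion carries the saturation hypothesis is a welcome clarification that the paper's own (slightly garbled) statement of the two inclusion equivalences glosses over.
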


\section{Proofs of our Main Results}
\label{section:MainResults}

Finally, we are in a position to prove our main results. First we will state a lemma that we call the 
separation lemma.

\begin{lemma}[Separation Lemma]
\label{lemma:SeparationLemma}
Let $\automaton$ and $\automaton'$ be two slice automata over $\directedslicealphabet(c,T)$. And suppose that 
$\automaton'$ is saturated. Then one can decide whether there exists a $(b,r)$-bounded $p/t$-net $N$ such that
\begin{enumerate}[i)]
	\item $N$ is $c$-sem-minimal for $\lang_{\partialOrder}(\automaton)$,
	\item $\partialorders(N,c) \cap \lang_{\partialOrder}(\automaton') = \emptyset$.
\end{enumerate}
In the case such a net $N$ exists one can automatically construct it.
\end{lemma}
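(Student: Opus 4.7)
The plan is to reduce the separation problem to the synthesis (Theorem \ref{theorem:RegularSliceLanguagesPetriNets}.\ref{theorem:Synthesis}) and verification (Theorem \ref{theorem:RegularSliceLanguagesPetriNets}.\ref{theorem:Verification}) results, combined with closure of saturated transitively-reduced slice languages under $c$-complementation (Lemma \ref{lemma:PropertiesSaturatedSliceLanguages}).

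First I rephrase condition (ii) as an upper-bound. Applying Lemma \ref{lemma:TransitiveReductionSliceLanguages} to $\automaton'$ I may assume $\automaton'$ is transitively reduced while still saturated. Using the universal automaton $\automaton(c,T)$ from Lemma \ref{lemma:InitialLanguage} together with item 3 of Lemma \ref{lemma:PropertiesSaturatedSliceLanguages}, I construct a saturated transitively-reduced slice automaton $\automaton_{\mathit{max}}$ with $\lang_{\partialOrder}(\automaton_{\mathit{max}})=\overline{\lang_{\partialOrder}(\automaton')}^c$. Condition (ii) then becomes the upper bound $\partialorders_{\mathit{sem}}(N,c)\subseteq\lang_{\partialOrder}(\automaton_{\mathit{max}})$, so the lemma reduces to deciding existence of a $c$-sem-minimal $(b,r)$-bounded net for $\lang_{\partialOrder}(\automaton)$ whose behavior is sandwiched between $\lang_{\partialOrder}(\automaton)$ and $\lang_{\partialOrder}(\automaton_{\mathit{max}})$.

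Next I invoke Theorem \ref{theorem:RegularSliceLanguagesPetriNets}.\ref{theorem:Synthesis} on $\automaton$ to decide whether \emph{any} $c$-sem-minimal $(b,r)$-bounded $p/t$-net for $\lang_{\partialOrder}(\automaton)$ exists. If not, condition (i) is unsatisfiable and I return NO. Otherwise I obtain a concrete candidate $N_0$, compute the saturated transitively-reduced slice automaton $\automaton_{\mathit{sem}}(N_0,c)$ via Theorem \ref{theorem:RegularSliceLanguagesPetriNets}.\ref{theorem:Expressibility}, and test $\lang(\automaton_{\mathit{sem}}(N_0,c))\subseteq\lang(\automaton_{\mathit{max}})$, which by item 4 of Lemma \ref{lemma:PropertiesSaturatedSliceLanguages} is equivalent to $\partialorders_{\mathit{sem}}(N_0,c)\subseteq\lang_{\partialOrder}(\automaton_{\mathit{max}})$, i.e.\ to (ii). If this succeeds, $N_0$ is the desired net. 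For $\mathit{sem}=\mathit{ex}$, the uniqueness of execution behavior among $c$-ex-minimal nets noted in Subsection \ref{subsection:Synthesis} forces every candidate to share the behavior of $N_0$, so a failure here means no net works and I return NO.

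The main obstacle is the causal case, where distinct $c$-cau-minimal nets for $\lang_{\partialOrder}(\automaton)$ may exhibit distinct partial-order behaviors, so failure of the particular $N_0$ returned by Theorem \ref{theorem:RegularSliceLanguagesPetriNets}.\ref{theorem:Synthesis} is inconclusive. To overcome this I appeal to a finiteness observation: for fixed $T$, $b$ and $r$ the set of $(b,r)$-bounded $p/t$-nets on $T$ is finite up to behavioral equivalence, since any live place is a triple $(p_0,\putsp,\takesp)\in\{0,\ldots,b\}^{1+2|T|}$ and each place appears with multiplicity at most $r$. I enumerate this finite family; for each net $N$ I compute $\automaton_{\mathit{cau}}(N,c)$ via Theorem \ref{theorem:RegularSliceLanguagesPetriNets}.\ref{theorem:Expressibility} and decide, via Lemma \ref{lemma:PropertiesSaturatedSliceLanguages}, the inclusions $\lang_{\partialOrder}(\automaton)\subseteq\partialorders_{\mathit{cau}}(N,c)$ and $\partialorders_{\mathit{cau}}(N,c)\subseteq\lang_{\partialOrder}(\automaton_{\mathit{max}})$, together with $c$-cau-minimality of $N$ (by scanning all other nets $N'$ in the family for the strict inclusion $\lang_{\partialOrder}(\automaton)\subseteq\partialorders_{\mathit{cau}}(N',c)\subsetneq\partialorders_{\mathit{cau}}(N,c)$). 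The first net passing all three tests is returned; otherwise the answer is NO. Since the family is finite and every check is decidable by the tools above, this yields the required decision procedure and the constructive output when applicable.
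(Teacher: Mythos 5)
Your proposal is correct, and its first half follows the same route as the paper: invoke Theorem \ref{theorem:RegularSliceLanguagesPetriNets}.\ref{theorem:Synthesis} to obtain a $c$-$\mathit{sem}$-minimal candidate $N_0$, represent its behavior by a saturated transitively reduced slice automaton via Theorem \ref{theorem:RegularSliceLanguagesPetriNets}.\ref{theorem:Expressibility}, and test the disjointness condition using Lemma \ref{lemma:PropertiesSaturatedSliceLanguages} (the paper tests emptiness of $\lang_{\partialOrder}(\automaton')\cap\lang_{\partialOrder}(\automaton'')$ directly via item 5, whereas you detour through the $c$-complement $\automaton_{\mathit{max}}$ and an inclusion test; these are equivalent, and your preliminary transitive reduction of $\automaton'$ is a detail the paper's proof silently assumes). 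Where you genuinely diverge is the causal case. The paper concludes from the failure of the single candidate $N_0$ that no solution exists, ``by minimality''; this is airtight for $\mathit{sem}=\mathit{ex}$, where all $c$-ex-minimal nets share one behavior, but for $\mathit{sem}=\mathit{cau}$ the paper itself observes (Subsection \ref{subsection:Synthesis}) that distinct $c$-cau-minimal nets may have incomparable behaviors, so one of them could intersect $\lang_{\partialOrder}(\automaton')$ while another does not. You correctly flag this and repair it by enumerating the finitely many behaviorally distinct $(b,r)$-bounded nets over $T$ and testing each for minimality and disjointness with the decidability tools of Lemma \ref{lemma:PropertiesSaturatedSliceLanguages} and Theorem \ref{theorem:RegularSliceLanguagesPetriNets}.\ref{theorem:Verification}. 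What the paper's version buys is brevity and a much smaller search; what yours buys is an argument that remains sound in the causal case without assuming the synthesis procedure returns a ``best possible'' minimal net. One small point to tighten: your bound $(p_0,\putsp,\takesp)\in\{0,\ldots,b\}^{1+2|T|}$ should be stated as holding only up to behavioral equivalence (a $b$-bounded place may syntactically carry $\takesp(t)>b$, permanently disabling $t$, which is behaviorally the same as capping at $b+1$), but this does not affect finiteness of the enumeration.
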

\begin{proof}
First we apply Theorem \ref{theorem:RegularSliceLanguagesPetriNets}.\ref{theorem:Synthesis} to determine if there exists 
a $(b,r)$-bounded $p/t$-net $N$ that is $c$-sem-minimal for $\lang_{\partialOrder}(\automaton)$. In case such a net 
exists we construct it. Now using Theorem \ref{theorem:RegularSliceLanguagesPetriNets}.\ref{theorem:Expressibility} we 
construct a slice automaton $\automaton''$ show causal/execution behavior is precisely that of $N$. Finally, since 
$\automaton'$ is transitively reduced and saturated we can use Lemma \ref{lemma:PropertiesSaturatedSliceLanguages} to test 
whether $\lang_{\partialOrder}(\automaton')\cap \lang_{\partialOrder}(\automaton'')$. Notice that by the minimality of
$\lang_{\partialOrder}(N)$ if this intersection is not empty, then the problem has no solution. $\square$
\end{proof}

\paragraph{\textbf{Proof of Theorem \ref{theorem:MSOAndPetriNets}}}
Let $\petriNet=(P,T)$ be a $b$-bounded $p/t$-net and let $\mathit{sem}\in \{\mathit{ex},\mathit{cau}\}$. By Theorem \ref{theorem:RegularSliceLanguagesPetriNets} 
we can construct a saturated, transitively reduced slice automaton $\automaton_{\mathit{sem}}(\petriNet,c)$ such that 
$\lang_{\partialOrder}(\automaton_{\mathit{sem}}(\petriNet,c)) = \partialorders_{\mathit{sem}}(N,c)$.
Now by Lemma \ref{lemma:ModelCheckingRegularSliceLanguages} we can effectively determine  
whether $\lang_{\partialOrder}(\automaton_{\mathit{sem}}(\petriNet,c))\cap \partialorders(c,T,\varphi) = \emptyset$,
\\whether $\lang_{\partialOrder}(\automaton_{\mathit{sem}}(\petriNet,c)) \subseteq \partialorders(c,T,\varphi)$ 
or whether $\partialorders(c,T,\varphi) \subseteq \lang_{\partialOrder}(\automaton_{\mathit{sem}}(\petriNet,c))$.
$\square$

\paragraph{\textbf{Proof of Theorem \ref{theorem:MSOSynthesis}}}

Let $\varphi$ be a $\msopartialorders$ formula. By Lemma \ref{lemma:SaturatedAutomatonPartialOrders} 
one can construct a saturated, transitively reduced slice automaton $\automaton = \automaton(c,\alphabettransitions,\varphi\wedge \rho \wedge \gamma(c))$ 
over $\directedslicealphabet(c,\alphabettransitions)$ such that 
$\lang_{\partialOrder}(\automaton) = \partialorders(c,T,\varphi)$. 
By Theorem \ref{theorem:RegularSliceLanguagesPetriNets} one may automatically determine whether there exists a $(b,r)$-bounded 
$p/t$-net $\petriNet$ which is $c$-$\mathit{sem}$-minimal for 
$\lang_{\partialOrder}(\automaton)$, and in the case such a net exists, one may automatically construct it.
$\square$

\paragraph{\textbf{Proof of Theorem \ref{theorem:SemanticallySafestSubsystem}: }}
By Lemma \ref{lemma:SaturatedAutomatonPartialOrders} one can construct a saturated, transitively reduced slice automaton 
$\automaton = \automaton(c,\alphabettransitions,\varphi^{\mathit{gr}}\wedge \rho \wedge \gamma(c))$ such that
$\lang_{\partialOrder}(\automaton) = \partialorders(c,T,\varphi)$.
By Theorem \ref{theorem:RegularSliceLanguagesPetriNets}.\ref{theorem:Expressibility}, one 
can construct a saturated, transitively reduced slice automaton $\automaton'$
such that $\lang_{\partialOrder}(\automaton') = \partialorders_{sem}(N,c)$. 
Since both $\automaton$ and $\automaton'$ are
saturated and transitively reduced, by Lemma \ref{lemma:PropertiesSaturatedSliceLanguages}, we have that the 
slice automaton $\automaton\cap \automaton'$ is saturated and transitively reduced. Additionally 
$\lang_{\partialOrder}(\automaton\cap \automaton') = \lang_{\partialOrder}(\automaton)\cap \lang_{\partialOrder}(\automaton') = 
\partialorders(c,T,\varphi)\cap \partialorders_{\mathit{sem}}(\petriNet,c)$.
Additionally, by Lemma \ref{lemma:PropertiesSaturatedSliceLanguages} we can construct a transitively reduced and saturated slice automaton 
$\overline{\automaton'}^c$ such that $\lang(\overline{\automaton}^c) = \partialorders(c,T) \backslash \partialOrder(N,c)$. 
Thus as a last step we may apply Lemma \ref{lemma:SeparationLemma} to determine whether 
there exists a $(b,r)$-bounded $p/t$-net $N$ that is $c$-sem-minimal for  $\lang_{\partialOrder}(\automaton\cap \automaton')$ 
and such that $\partialorders(N,c)\cap \lang_{\partialOrder}(\overline{\automaton'}^c) = \emptyset$, and in the 
case that such a net exists, we can effectively construct it.  
$\square$

\paragraph{\textbf{Proof of Theorem \ref{theorem:BehavioralRepair}: }}
By Lemma \ref{lemma:SaturatedAutomatonPartialOrders} one can construct a saturated, transitively reduced slice automata 
$\automaton_{\varphi} = \automaton(c,\alphabettransitions,\varphi^{\mathit{gr}}\wedge \rho \wedge \gamma(c))$
and $\automaton_{\psi} = \automaton(c,\alphabettransitions,\psi^{\mathit{gr}}\wedge \rho \wedge \gamma(c))$  such that
$\lang_{\partialOrder}(\automaton_{\varphi}) = \partialorders(c,T,\varphi)$ and $\lang_{\partialOrder}(\automaton_{\psi}) = \partialorders(c,T,\psi)$ 
respectively. By Theorem \ref{theorem:RegularSliceLanguagesPetriNets}.\ref{theorem:Expressibility}, one 
can construct a saturated, transitively reduced slice automaton $\automaton'$
such that $\lang_{\partialOrder}(\automaton') = \partialorders_{sem}(N,c)$. 
Since both $\automaton$ and $\automaton'$ are
saturated and transitively reduced, by Lemma \ref{lemma:PropertiesSaturatedSliceLanguages}, we have that the 
slice automaton $\automaton_{\varphi}\cap \automaton'$ is saturated and transitively reduced. Additionally 
$\lang_{\partialOrder}(\automaton_{\varphi}\cap \automaton') = \lang_{\partialOrder}(\automaton_{\varphi})\cap \lang_{\partialOrder}(\automaton') = 
\partialorders(c,T,\varphi)\cap \partialorders_{\mathit{sem}}(\petriNet,c)$.
Thus as a last step we may apply Lemma \ref{lemma:SeparationLemma} to determine whether 
there exists a $(b,r)$-bounded $p/t$-net $N$ that is $c$-sem-minimal for  $\lang_{\partialOrder}(\automaton_{\varphi}\cap \automaton')$ 
and such that $\partialorders(N,c)\cap \lang_{\partialOrder}(\automaton_{\psi}) = \emptyset$, and in the 
case that such a net exists, we can effectively construct it.  
$\square$

\paragraph{\textbf{Proof of Theorem \ref{theorem:Contract}: }}
Let $\varphi^{yes}$ and $\varphi^{no}$ be two $\msopartialorders$ formulas specifying respectively a set of 
good partial order behaviors and a set of bad partial order behaviors. 
By Lemma \ref{lemma:SaturatedAutomatonPartialOrders} we can construct saturated, transitively reduced 
slice automata 
$$\automaton^{yes} = \automaton(c,T,[\varphi^{\mathit{yes}}]^{\mathit{gr}}\wedge \rho \wedge \gamma(c))\mbox{\hspace{0.5cm} and\hspace{0.5cm} }
\automaton^{no} = \automaton(c,T,[\varphi^{\mathit{no}}]^{\mathit{gr}}\wedge \rho \wedge \gamma(c))$$ such that 
$\lang_{\partialOrder}(\automaton^{yes})=\partialorders(c,T,\varphi^{yes})$ and $\lang_{\partialOrder}(\automaton^{\mathit{no}}, \partialorders(c,T,\varphi^{no}))$. 
Now by Theorem \ref{theorem:RegularSliceLanguagesPetriNets}.\ref{theorem:Synthesis} we can synthesize a $(b,r)$-bounded 
$p/t$-net $N$ that is $c$-$\mathit{sem}$-minimal with respect to $\lang_{\partialOrder}(\automaton^{\mathit{yes}})$. 
Since $\automaton^{\mathit{no}}$ is saturated we can apply Lemma \ref{lemma:SeparationLemma} to determine whether 
there exists a $(b,r)$-bounded $p/t$-net $N$ that is $c$-sem-minimal for $\lang_{\partialOrder}(\automaton^{\mathit{yes}})$ 
and such that $\partialorders(N,c)\cap \lang_{\partialorders}(\automaton^{\mathit{no}}) = \emptyset$. In the case 
such a net exists we can effectively construct it.  
$\square$

\vspace{-15pt}
\section{Conclusion}
\label{section:Conclusion}

In this work we have shown that both model checking of the $c$-partial-order behavior of bounded
$p/t$-nets and the synthesis of bounded $p/t$-nets from MSO definable sets of $c$-partial-orders are computationally feasible.
By combining these two results, we introduced the {\em semantically safest subsystem} problem as a new primitive for the 
study of automated correction of computational systems. Additionally we were able to lift the theory of automatic program repair
developed in \cite{EssenJobstmann2013} to the realm of bounded $p/t$-nets and to develop a methodology of synthesis by contracts 
that is suitable for the partial order theory of concurrency.

\bibliographystyle{abbrv} 
\bibliography{VerificationSynthesisCorrectionPetriNetsMSO}

\appendix

\section{Proofs of Auxiliary Results}

\paragraph{\textbf{Proof of Proposition \ref{proposition:UnionOfKPathsSaturation}: }}
Let $H$ be the union of $c$ paths $\path_1,...,\path_c$ and let $\boldU=\boldS_1\boldS_2...\boldS_n$ be a unit decomposition
of $H$. Let $v_i$ be the center vertex of $\boldS_i$. Then the ordering $\omega=(v_1,...,v_n)$ is a topological ordering of $H$. This implies 
that for any $i\in \{1,...,n\}$, and any $j\in \{1,...,c\}$ there exists at most one edge from $p_j$ whose source is in $\{v_1,...,v_i\}$ 
and whose target is in $\{v_{i+1},...,v_n\}$. Thus there exists at most $c$ edges in $p_1\cup ...\cup p_c$ with whose source is in $\{v_1,...,v_i\}$
and whose target is in $\{v_{i+1},...,v_n\}$. This implies that $|E(\{v_1,...,v_i\},\{v_{i+1},...,v_n\})|\leq c$ for each $i\in \{1,...,n\}$. 
and thus $\omega$ has cut-width at most $c$ with respect to $H$ since the  $w(\boldU)$ is equal to the cut-width of $\omega$ we 
have that that $w(\boldU)\leq c$. 
$\square$

\paragraph{\textbf{Proof of Lemma \ref{lemma:MSORegularPaths}: }}
By Lemma \ref{lemma:MSO-Regular},  $\lang(c,\alphabettransitions,\varphi\wedge\gamma(c))$ is a regular slice language over $\directedslicealphabet(c,T)$.   
Thus we just need to show that  $\lang(c,\alphabettransitions,\varphi\wedge\gamma(c))$ is saturated. 
A unit decomposition $\boldU$ belongs to $\lang(c,\alphabettransitions,\varphi\wedge \gamma(c))$ 
if and only if $\boldU$ satisfies the following three properties:  
$\boldU\in \lang(c,\alphabettransitions)$, $\composedU$ can be covered by $c$ paths and $\composedU\models \varphi$.
Since $\composedU$ can be covered by $c$ paths, it follows from Proposition 
\ref{proposition:UnionOfKPathsSaturation} that  $\unitdecompositions(\composedU)\subseteq \lang(\directedslicealphabet(c,\alphabettransitions))$.
Now let $\boldU'$ be an arbitrary unit decomposition in $\unitdecompositions(\composedU)$. Since
$\composedUprime = \composedU$, we have that $\composedUprime$ is the union of $c$ paths and satisfies $\varphi$. 
Thus $\composedU'\in \lang(c,\alphabettransitions,\varphi \wedge \gamma(c))$. Since $\boldU'$ was taken to 
be an arbitrary unit decomposition in $\unitdecompositions(\composedU)$, we have that $\lang(c,\alphabettransitions,\varphi\wedge \gamma(c))$
is saturated. $\square$

\paragraph{\textbf{Proof of Proposition \ref{proposition:PartialOrdersVsHasseDiagrams}: }}
Let $\apartialorder=(V,<,l)$ be a partial order and $\transitiveReduction(\apartialorder)=(V,E,l)$ be the transitive reduction of $\apartialorder$. 
Then for any two vertices $v,v'\in V$, we have that $v < v'$ if and only if there is a path $v=v_1e_1v_2...e_{n-1}v_n=v'$  from $v$ to 
$v'$ in $\transitiveReduction(\apartialorder)$. Now let $path(x_1,X,Y,x_2)$ be a $\msographs$ formula which is true in a DAG $H$ whenever there is a path 
starting at $x_1$, finishing at $x_2$, with internal vertices $X$ and internal edges $Y$. For a $\msopartialorders$ formula $\varphi$, let $\varphi^{\mathit{gr}}$ be the $\msographs$ formula 
which is obtained from $\varphi$ by replacing each occurrence of the atomic formula $x_1<x_2$ in $\varphi$ by the formula 
$\exists X\, \exists Y\, path(x_1,X,Y,x_2)$. Now have that $\apartialorder \models \varphi $ if and only if 
$\transitiveReduction(\apartialorder)\models \varphi^{\mathit{gr}}$. $\square$

\paragraph{\textbf{Proof of Lemma \ref{lemma:SaturatedAutomatonPartialOrders}: }}
Let $\rho$ be the $\msographs$ formula which 
is true in a DAG $H$ whenever $H$ is transitively reduced. Let $\varphi^{gr}$
be the formula obtained from $\varphi$ as in Proposition \ref{proposition:PartialOrdersVsHasseDiagrams}. 
Then we have that a DAG $H$ satisfies $\gamma(c)\wedge \rho \wedge \varphi^{\mathit{gr}}$
if and only if $H$ can be covered by $c$ paths, $H$ is transitively reduced and if 
the partial order $\transitiveClosure(H)$ induced by $H$ satisfies $\varphi$.
By Lemma \ref{lemma:MSORegularPaths}, the slice language $\lang(c,\alphabettransitions,\varphi\wedge \rho\wedge \gamma(c))$
is saturated, regular, and consists precisely of the unit decompositions yielding 
a graph satisfying $\varphi\wedge \rho\wedge \psi(c)$. Thus we just need to set $\automaton(c,T,\varphi)$ as the 
minimal deterministic finite automaton generating $\lang(c,\alphabettransitions,\varphi\wedge \rho\wedge \gamma(c))$. $\square$

\paragraph{\textbf{Proof of Lemma \ref{lemma:ModelCheckingRegularSliceLanguages}: }}
Let $\automaton'=\automaton(c,\alphabettransitions,\varphi^{\mathit{gr}}\wedge \rho \wedge \gamma(c))$. 
By Lemma \ref{lemma:SaturatedAutomatonPartialOrders} $\automaton'$ is saturated, transitively reduced 
and $\lang(\automaton') = \partialorders(c,T,\varphi)$. Since $\automaton$ is also transitively reduced 
and saturated, by Lemma \ref{lemma:PropertiesSaturatedSliceLanguages}, 
$\lang_{\partialOrder}(\automaton)\cap \lang_{\partialOrder}(\automaton') = \emptyset$ if and only if 
$\lang(\automaton) \cap\lang(\automaton')= \emptyset$, 
$\lang_{\partialOrder}(\automaton)\subseteq \lang_{\partialOrder}(\automaton')$ if and only if 
$\lang(\automaton) \subseteq \lang(\automaton')$ and $\lang_{\partialOrder}(\automaton')\subseteq \lang_{\partialOrder}(\automaton)$ 
if and only if $\lang(\automaton) \subseteq \lang(\automaton')$. Thus we have reduced 
emptiness of intersection and inclusion of the partial order languages represented by $\automaton$  and $\automaton'$
to the emptiness of intersection and inclusion of the regular slice languages accepted by $\automaton$ and $\automaton'$. 
$\square$

\end{document}